\newtheorem{lemma}{Lemma}
\newtheorem{definition}{Definition}
\newtheorem{proposition}{Proposition}
\newcommand{\overbar}[1]{\mkern 1.5mu\overline{\mkern-1.5mu#1\mkern-1.5mu}\mkern 1.5mu}
\begin{document}

\title{Stable Throughput Region of the Two-User Interference Channel}
\author{\IEEEauthorblockN{Nikolaos Pappas and Marios Kountouris}
		
	\thanks{N. Pappas is with the Department of Science	and Technology, Link{\"o}ping University, SE-60174 Norrk{\"o}ping, Sweden. Email: nikolaos.pappas@liu.se.}
	\thanks{M. Kountouris is with the Mathematical and Algorithmic Sciences Lab, Paris Research Center, Huawei Technologies Co. Ltd.
Boulogne-Billancourt, 92100, France. Email: marios.kountouris@huawei.com.}
}

\maketitle

\begin{abstract}
We consider the two-user interference channel where two independent pairs communicate concurrently and investigate its stable throughput region.
First, the stability region is characterized for the general case, i.e., without any specific consideration on the transmission and reception structures. Second, we explore two different interference harnessing strategies at the receiver: treating interference as noise and successive interference cancellation. Furthermore, we provide conditions for the convexity of the stability region and for which a certain receiver strategy leads to broader stability region. The impact of multiple transmit antennas on the stability region is briefly discussed. Finally, we study the effect of random access on the stability region of the two-user interference channel.
\end{abstract}

\section{Introduction} \label{sec:intro}
The theoretical foundation behind some of our society’s most advanced communication technologies can be found in information theory, which provided valuable insights and a roadmap to communication theorists and engineers. Information theory, as developed ever since Shannon's seminal paper, has mainly focused on single links (source-channel-destination) and centralized networks. A central question has been the maximization of the rate of reliable data transmission from a source to a destination; the concept of channel capacity, i.e., the boundary between the physically possible and physically impossible in terms of reliable data rate, has played a cardinal role. Nevertheless, the remarkable success of Shannon theory has not yet translated to communication networks. Extending the information-theoretic approach to a multi-terminal system involves the characterization of the maximum joint achievable rates at which different users sharing the channel can transmit (capacity region). Link-based information theory does not appear well-suited to the role of understanding network performance limits. This is partly because the capacity region is usually derived under the assumption of backlogged users and saturated (non-empty) queues. Considering the stochastic and bursty nature of traffic, other measures of rate performance become relevant and meaningful, such as the maximum stable throughput or stability region (in packets/slot). The stability region is defined as the union of all arrival rates for which all queue lengths stay finite or have a non-degenerate limiting probability distribution (there are several definitions of stability)~\cite{Szpankowski:stability}. These two regions are not in general identical and the conditions under which they coincide are known in very few cases \cite{AEUnion}.

There exists a vast literature related to the characterization of the stability region in wireless networks. Early work focuses on simple models, such as the collision channel and point-to-point communication link \cite{SastryNOW}. In this paper, we consider the two-user interference channel, which models communication scenarios with multiple point-to-point links transmitting over a common frequency band, thus generating interference to each other. The capacity region of the general Gaussian interference channel is a long standing problem and is only known for special cases, such as Gaussian channels with weak (``noisy") or strong interference~\cite{Carleial75, Shang09, AV09}. Furthermore, information-theoretic results advocate several ways of handling the interference, including orthogonal access, treating interference as noise (IAN), successive interference cancellation (SIC), joint decoding and interference alignment~\cite{b:Gamal_NIT}. There is a large amount of information- and communication-theoretic studies for the interference channels, including  multiuser cognitive interference channels \cite{MaamariTCCN2015}, power allocation \cite{ChaitanyaNCC2013}, and multi-antenna beamforming \cite{LindblomWCL2013,MISO_Pareto}. Nevertheless, previous studies usually assume infinite backlog and ignore the effect of bursty traffic. In this work, we investigate the stability region of the two-user interference channel, which, to the best of our knowledge, has not been reported to the literature.

\subsection{Related work}
In \cite{b:Simeone}, the authors studied a cognitive interference channel, as well as the case of a primary user and a cognitive user with and without relaying capabilities. The stable throughput region and the delay performance of a queue-aware multiple access scheme are studied in \cite{DimitriouTWC2018}. Using the effective capacity framework, \cite{QiaoTCOM2013} studies the throughput region of broadcast and interference channels under statistical delay constraints. The stability region exploiting past receptions is derived in \cite{PanACCESS2017} and \cite{JiaoIET2017} considers the problem of power allocation for the two-user interference channel with unsaturated traffic. The stability region of the two-user broadcast channel is derived in \cite{PappasTCOM2018} and the effect of multipacket reception on stability and delay of slotted ALOHA-based random access systems is considered in \cite{b:Naware}.
The stability region of the two-user interference channel is first reported in the conference version of this journal \cite{PappasITW2013}.

\subsection{Contributions}
In this work, we study the two-user interference channel (IC), where each user has bursty arrivals and transmits a packet whenever its queue is not empty. First, we obtain the exact stability region for the general case, i.e., without any specific consideration on the details of the transmission and interference handling schemes. 
The exact characterization of the stability region of networks with bursty sources is known to be a challenging problem due to the coupling among queues, i.e., the service process of a queue depends on the status of the other queues. To overcome this difficulty, we use the stochastic dominance technique~\cite{rao:stability}.
We also investigate two widely used interference handling schemes: (i) each receiver treats interference as noise, and (ii) successive interference cancellation is employed at the receiver side. Furthermore, we derive conditions for the convexity of the stability region and show under which system parameters, each interference management technique is superior (in the sense of broader stability region) compared to the other. Finally, we study the case of random access and we derive the stability conditions of the two-user interference channel under random access. The obtained results can be utilized for investigating the effect of successive interference cancellation in network level cooperative schemes \cite{PappasGC10,PappasITW11,PappasJCN2016}.

In Section \ref{sec:model} we describe the system model and in Section \ref{sec:general_region} we calculate the general stability region and obtain the convexity conditions. In Sections \ref{sec:IAN_region} and \ref{sec:SIC_region} we consider the case of IAN and SIC, respectively. In Section \ref{sec:SICIAN_region} we consider the case where one receiver treats interference as noise and the other uses SIC. In Section \ref{sec:region_RA}, we derive the stability region in the general case under random access and in Section \ref{sec:closure} we consider the closure of the stability region. Numerical evaluation of the analytical results is provided in Section \ref{sec:results} and Section \ref{sec:conclusions} concludes the paper.

\section{System Model} \label{sec:model}

We consider a two-user interference channel, as depicted in Fig.~\ref{fig:model}, in which each source $S_i,i=1,2$ intends to communicate with and to convey information to its respective destination $D_{i},i=1,2$. 
This setting models two independent wireless systems operating in the same spectrum; communication between transmitters and receivers takes place simultaneously when they have packets to transmit. The packet arrival processes at $S_1$ and $S_2$ are assumed to be independent and stationary with mean rates $\lambda_1$ and $\lambda_2$, respectively. Transmitter $S_i$ has an infinite capacity queue to store incoming packets and $Q_i$ denotes the size in number packets of the $i$-th queue. The packets are assumed to have the same size.

\begin{figure}[t]
\centering
\includegraphics[scale=1.5]{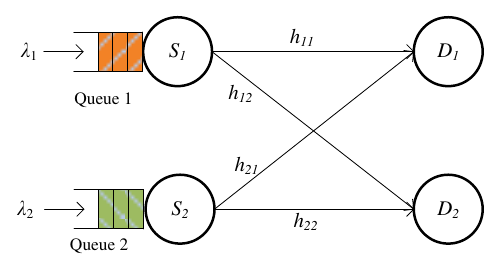}
\caption{Two-user interference channel with bursty arrivals.}
\label{fig:model}
\end{figure}

Time is assumed to be slotted and each source transmits a packet in a timeslot if its queue is not empty; otherwise it remains silent. \footnote{In Section \ref{sec:region_RA}, we consider the case of random access.} Encoding and decoding are performed on a per-slot basis and the transmission of one packet requires one timeslot. We assume the availability of instantaneous and error-free acknowledgments (ACKs); this is a simplifying but widely-used assumption in this kind of studies. If a packet fails to be decoded correctly by its intended destination, then the packet will remain in its queue and will be retransmitted in the next time slot.  

We assume a frequency-flat block fading channel, i.e., the fading coefficients $h_{ij}$ follow the same distribution and remain constant during one timeslot, but change independently from one timeslot to another. Unless otherwise stated, we consider Rayleigh fading, i.e., channel coefficients follow a circularly symmetric complex Gaussian distribution with zero mean and unit variance. The noise is assumed to be additive white Gaussian with zero mean and unit variance. The transmission power of source $S_i$ is denoted by $p_i$, and $r_{ij}$ denotes the distance between transmitter $S_i$ and receiver $D_j$. In addition to small-scale fading, we consider large-scale pathloss attenuation, which is denoted by the non-increasing function of the distance $r$, $\ell(r)$.

A packet is considered to be received/decoded correctly if the received signal-to-interference-plus-noise ratio (SINR) is above a predefined threshold $\gamma$. Let $\mathcal{D}^\mathcal{T}_{i}$ denote the event that the $i$-th receiver is able to decode the packet transmitted from the $i$-th source given a set of active transmitters denoted by $\mathcal{T}$, i.e., $\mathcal{D}^{\{ 1,2 \} }_{1}$ denotes the event that the first receiver can decode the information from the first source when both transmitters are active ($\mathcal{T} = \{1,2\}$). This event is defined as
\begin{equation}
\mathcal{D}^{\{ i \} }_{\mathcal{T}} \triangleq \left\lbrace \mathrm{SINR}_i \geq \gamma_i\right\rbrace.
\end{equation}
The events $\mathcal{D}^{\{ i,j \} }_{i}$ (both sources are active) are defined based on the received SINR, which in turn depends on the specific interference treatment on each receiver; specific expressions are provided in subsequent sections.
When only $S_i$ is active the event $\mathcal{D}^{\{ i \} }_{i}$ is defined as
\begin{equation}
\mathcal{D}^{\{ i \} }_{i} \triangleq \left\lbrace \mathrm{SNR}_i \geq \gamma_i\right\rbrace,
\end{equation}
which denotes that the received signal-to-noise ratio (SNR) of the $i$-th receiver is above a certain threshold $\gamma_i$.

\subsection{Stability Criterion}

We adopt the following definition of queue stability~\cite{Szpankowski:stability}:

\begin{definition}
Denote by $Q_i^t$ the length of queue $i$ at the beginning of time slot $t$. The queue is said to be \emph{stable} if
$\displaystyle \lim_{t \rightarrow \infty} {Pr}[Q_i^t < {x}] = F(x)$ and $\displaystyle \lim_{ {x} \rightarrow \infty} F(x) = 1$.
\end{definition}

Although we do not make explicit use of this definition, we use its corollary consequence which is Loynes' theorem~\cite{b:Loynes}. This theorem states that if the arrival and service processes of a queue are strictly jointly stationary and the average arrival rate is less than the average service rate, then the queue is stable. If the average arrival rate is greater than the average service rate, then the queue is unstable and the value of $Q_i^t$ approaches infinity almost surely. 

The stability region of the system is defined as the set of arrival rate vectors $\boldsymbol{\lambda}=(\lambda_1, \lambda_2)$ for which the queues in the system are stable.

\section{Stability Region: General Case} \label{sec:general_region}

In this section we provide the stability region in a parametric form without considering any specific transmission and interference management technique at the receivers. The main result of this section is oblivious to the details of how successful reception is achieved. It is just based on the values of the success probabilities, which depend on several factors including power, rate, link distance, encoding/decoding algorithms. The general conditions for the stability region of the two-user IC are particularized in subsequent sections.

The service rates for the sources are given by
\begin{equation} \label{eq:mu_1}
\mu_1 = \mathrm{Pr} [Q_2 > 0] \mathrm{Pr}\left(\mathcal{D}^{\{1,2\}}_{1}\right) + \mathrm{Pr} [Q_2 = 0] \mathrm{Pr}\left(\mathcal{D}^{\{1\}}_{1}\right),
\end{equation}

\begin{equation} \label{eq:mu_2}
\mu_2 = \mathrm{Pr} [Q_1 > 0] \mathrm{Pr}\left(\mathcal{D}^{\{1,2\}}_{2}\right) + \mathrm{Pr} [Q_1 = 0] \mathrm{Pr}\left(\mathcal{D}^{\{2\}}_{2}\right).
\end{equation}

Since the average service rate of each queue depends on the queue size of the other queues, it cannot be computed directly. Therefore, we apply the stochastic dominance technique \cite{rao:stability}: we construct hypothetical dominant systems, in which one of the sources transmits dummy packets when its packet queue is empty, while the other transmits according to its traffic.

\subsection{The first dominant system}

We consider the first dominant system, in which $S_1$ transmits dummy packets whenever its queue is empty, while $S_2$ behaves in the same way as in the original system. All other assumptions remain unaltered in the dominant system.

From Loyne's criterion~\cite{b:Loynes} it is known that the queue at the second source is stable if and only if $\lambda_2 < \mu_2$. Therefore, the stability condition is given by

\begin{equation}  \label{eq:stable2_dom1}
\lambda_2 < \mu_2 = \mathrm{Pr}\left(\mathcal{D}^{\{1,2\}}_{2}\right).
\end{equation}

From Little's theorem~\cite{b:Bertsekas}, the probability that the queue of the second transmitter is empty is given by

\begin{equation} \label{eq:q2nonempty_dom1}
\mathrm{Pr}[Q_2 > 0]=\frac{\lambda_2}{\mathrm{Pr}\left(\mathcal{D}^{\{1,2\}}_{2}\right)}.
\end{equation}

Substituting (\ref{eq:q2nonempty_dom1}) into (\ref{eq:mu_1}), we have that the service rate for the first source is given by

\begin{equation} \label{eq:mu1_dom1}
\mu_1 = \mathrm{Pr}\left(\mathcal{D}^{\{1\}}_{1} \right) - \frac{\lambda_2 \mathrm{Pr}\left(\mathcal{D}^{\{1\}}_{1}\right)}{\mathrm{Pr}\left(\mathcal{D}^{\{1,2\}}_{2}\right)} + \frac{\lambda_2\mathrm{Pr}\left(\mathcal{D}^{\{1,2\}}_{1}\right)}{\mathrm{Pr}\left(\mathcal{D}^{\{1,2\}}_{2}\right)}.
\end{equation}

The queue at the first source is stable if and only if $\lambda_1 < \mu_1$; hence the stability condition is given by
\begin{equation} \label{eq:stable1_dom1}
\lambda_1 < \mathrm{Pr}\left(\mathcal{D}^{\{1\}}_{1}\right) - \frac{\lambda_2 \mathrm{Pr}\left(\mathcal{D}^{\{1\}}_{1}\right)}{\mathrm{Pr}\left(\mathcal{D}^{\{1,2\}}_{2}\right)} + \frac{\lambda_2\mathrm{Pr}\left(\mathcal{D}^{\{1,2\}}_{1}\right)}{\mathrm{Pr}\left(\mathcal{D}^{\{1,2\}}_{2}\right)}.
\end{equation}

The stability region $\mathcal{R}_1$, obtained by the first dominant system and conditions (\ref{eq:stable1_dom1}) and (\ref{eq:stable2_dom1}), is given by (\ref{eq:R_1}) (on top of the next page).

\begin{figure*}[!t]
\begin{equation} \label{eq:R_1}
\mathcal{R}_1 = \left\lbrace (\lambda_{1},\lambda_{2}): \frac{\lambda_1}{\mathrm{Pr}\left(\mathcal{D}^{\{1\}}_{1}\right)} + \frac{\left[\mathrm{Pr}\left(\mathcal{D}^{\{1\}}_{1}\right) - \mathrm{Pr}\left(\mathcal{D}^{\{1,2\}}_{1}\right) \right] \lambda_2}{\mathrm{Pr}\left(\mathcal{D}^{\{1\}}_{1}\right)\mathrm{Pr}\left(\mathcal{D}^{\{1,2\}}_{2}\right)}   <  1,
\lambda_2 < \mathrm{Pr}\left(\mathcal{D}^{\{1,2\}}_{2}\right)  \right\rbrace
\end{equation}
\begin{equation} \label{eq:R_2}
\mathcal{R}_2 = \left\lbrace (\lambda_{1},\lambda_{2}): \frac{\lambda_2}{\mathrm{Pr}\left(\mathcal{D}^{\{2\}}_{2}\right)} + \frac{\left[\mathrm{Pr}\left(\mathcal{D}^{\{2\}}_{2}\right) - \mathrm{Pr}\left(\mathcal{D}^{\{1,2\}}_{2}\right) \right] \lambda_1}{\mathrm{Pr}\left(\mathcal{D}^{\{2\}}_{2}\right)\mathrm{Pr}\left(\mathcal{D}^{\{1,2\}}_{1}\right)}   <  1,
\lambda_1 < \mathrm{Pr}\left(\mathcal{D}^{\{1,2\}}_{1}\right)  \right\rbrace
\end{equation}
\end{figure*}

\subsection{The second dominant system}

In the second dominant system, source $S_2$ transmits dummy packets when its queue is empty and all other assumptions remain unaltered.
Similarly to the first dominant system and using Loyne's criterion, the stability condition is given by
\begin{equation}  \label{eq:stable1_dom2}
\lambda_1 < \mu_1 = \mathrm{Pr}\left(\mathcal{D}^{\{1,2\}}_{1}\right),
\end{equation}
since the queue at the first source is stable if and only if $\lambda_1 < \mu_1$.

The probability that the queue of the first user is empty is given by
\begin{equation} \label{eq:q1nonempty_dom2}
\mathrm{Pr}[Q_1 > 0]=\frac{\lambda_1}{\mathrm{Pr}\left(\mathcal{D}^{\{1,2\}}_{1}\right)}.
\end{equation}

Therefore, substituting (\ref{eq:q1nonempty_dom2}) into (\ref{eq:mu_2}) and given that the second queue is stable if and only if $\lambda_2 < \mu_2$, the stability condition is given by
\begin{equation} \label{eq:stable2_dom2}
\lambda_2 < \mu_2 = \mathrm{Pr}\left(\mathcal{D}^{\{2\}}_{2} \right) - \frac{\lambda_1 \mathrm{Pr}\left(\mathcal{D}^{\{2\}}_{2}\right)}{\mathrm{Pr}\left(\mathcal{D}^{\{1,2\}}_{1}\right)} + \frac{\lambda_1\mathrm{Pr}\left(\mathcal{D}^{\{1,2\}}_{2}\right)}{\mathrm{Pr}\left(\mathcal{D}^{\{1,2\}}_{1}\right)}.
\end{equation}

The stability region $\mathcal{R}_2$, obtained by the second dominant system and conditions (\ref{eq:stable2_dom2}) and (\ref{eq:stable1_dom2}), is given on the top of this page by (\ref{eq:R_2}).

An important observation made in \cite{rao:stability} is that the stability conditions obtained by the stochastic dominance technique are not only sufficient but also necessary conditions for the stability of the original system. The \emph{indistinguishability} argument~\cite{rao:stability} applies to our problem as well. Based on the construction of the dominant system, it is easy to see that the queues of the dominant system are always larger in size than those of the original system, provided they are both initialized to the same value. Therefore, given $\lambda_{2}<\mu_{2}$, if for some $\lambda_{1}$, the queue at $S_1$ is stable in the dominant system, then the corresponding queue in the original system must be stable. Conversely, if for some $\lambda_{1}$ in the dominant system, the queue at node $S_1$ saturates, then it will not transmit dummy packets, and as long as $S_1$ has a packet to transmit, the behavior of the dominant system is identical to that of the original system because dummy packet transmissions are eliminated as we approach the stability boundary. Therefore, the original and the dominant systems are indistinguishable at the boundary points.

The stability region is given by $\mathcal{R} = \mathcal{R}_1 \bigcup \mathcal{R}_2$ and is depicted in Fig.~\ref{fig:region}. The stability region is a two-dimensional {\emph{convex polyhedron}} when the following condition holds:

\begin{equation}\label{eq:convexity_general}
\frac{\mathrm{Pr}\left(\mathcal{D}^{\{1,2\}}_{1}\right)}{\mathrm{Pr}\left(\mathcal{D}^{\{1\}}_{1}\right)}+
\frac{\mathrm{Pr}\left(\mathcal{D}^{\{1,2\}}_{2}\right)}{\mathrm{Pr}\left(\mathcal{D}^{\{2\}}_{2}\right)} \geq 1.
\end{equation}

\begin{figure}[t]
\centering
\includegraphics[scale=0.6]{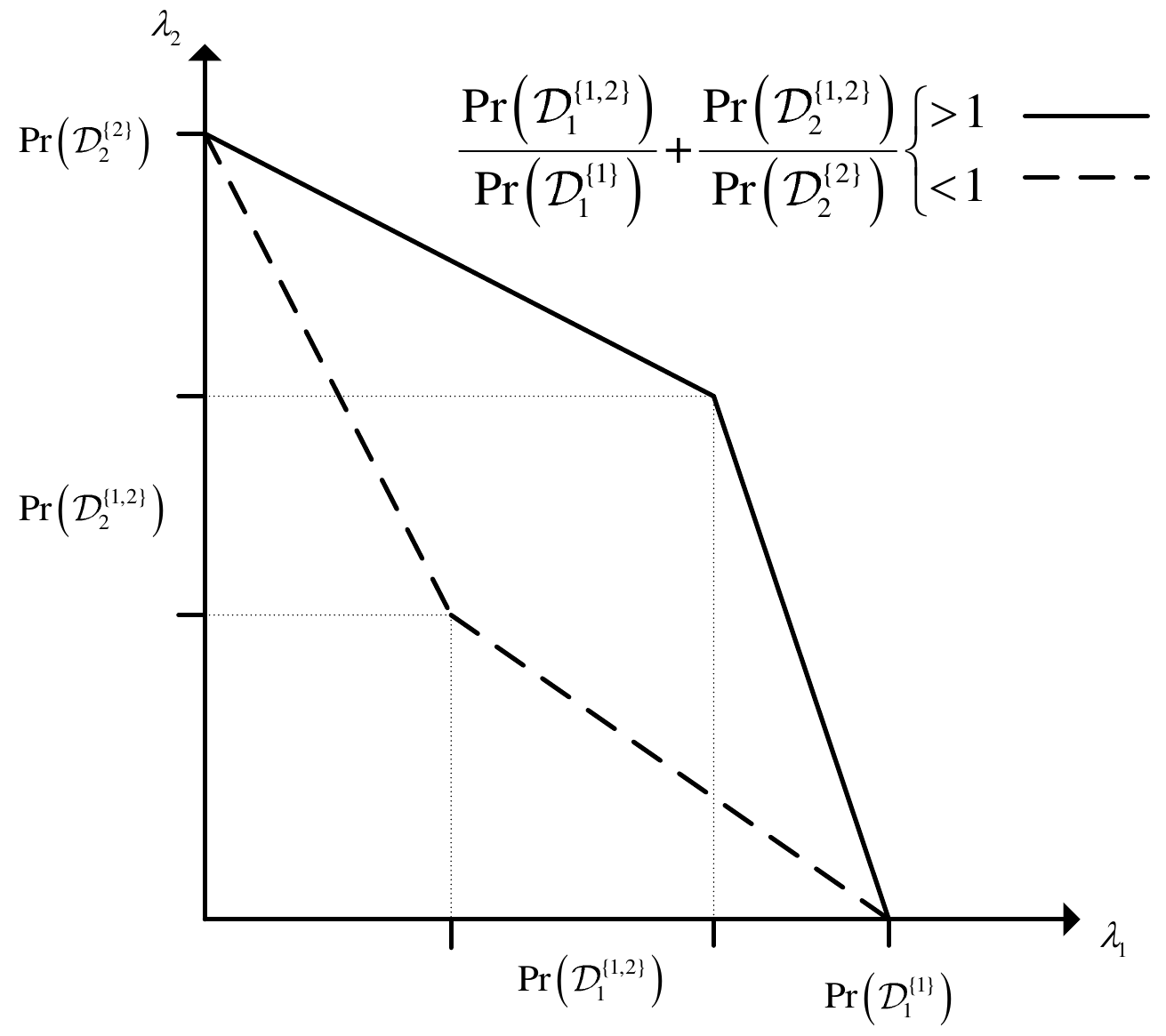}
\caption{The stability region for the general case.}
\label{fig:region}
\end{figure}

When (\ref{eq:convexity_general}) holds with equality, the stability region is a triangle and coincides with the case of time sharing. Convexity is an important property since it corresponds to the case when parallel concurrent transmissions are preferable to a time-sharing scheme. Additionally, convexity of the stability region implies that if two rate pairs are stable, then any rate pair lying on the line segment joining those two rate pairs is also stable.

The aforementioned result on the stability region of IC holds for any interference management technique. In the following sections, we particularize the stability conditions for the following interference harnessing cases: treating interference as noise and successive interference cancellation.

\section{IAN at Both Receivers} \label{sec:IAN_region}
In this section, we consider the case where both receivers decode their individual messages by treating interference from unintended sources as noise. Treating interference as noise is a widely used interference management strategy that has several practical and theoretical appeals. From a practical perspective, IAN involves the use of point-to-point channel codes, which are well understood and relatively easy to implement. In addition to its low complexity, IAN is shown to be robust to channel uncertainty and can be applied with coarse channel state information at the transmitter (CSIT). From a theoretical perspective, IAN is shown to be optimal for weak interference. Note that although IAN is simple technique, the rate region under IAN is general non convex.

When both transmitters are active, the event $\mathcal{D}^{\{ i,j \} }_{i}$ is given by
\begin{equation}
\mathcal{D}^{\{ i,j \} }_{i} \triangleq \left\lbrace \mathrm{SINR}_i \geq \gamma_i \right\rbrace \triangleq \left\lbrace \frac{|h_{ii}|^2 \ell(r_{ii}) p_i}{1+|h_{ji}|^2 \ell(r_{ji}) p_j} \geq \gamma_i \right\rbrace.
\end{equation}

Before proceeding further, we provide the following result for the success probability between the $i$-th transmitter and the $i$-th receiver when all the sources are active. 
\begin{proposition}
	\label{Ps_gen}
Denote the interference by $I = |h_{ji}|^2 \ell(r_{ji}) p_j$ and suppose $F_{|h_{ii}|^2}$ takes the form 
\begin{equation}
F_{|h_{ii}|^2}(x)= 1 - e^{-x}\sum_{k\in\mathcal{K}}a_{k} x^k
\end{equation}
for a finite set $\mathcal{K}$ and for valid distribution functions. Assuming that $|h_{ii}|^2$ and $|h_{ji}|^2$ are independent, then
\begin{equation}
\label{eq:PSIR} 
\mathrm{Pr}\left(\mathcal{D}^{\{i,j\}}_{i}\right) =  \sum_{k\in\mathcal{K}} \sum_{m=0}^{k}\left[a_{k}\binom{k}{m}(-1)^{2k-m}\phi^k e^{-\phi}\frac{d^m}{d\phi^m} \mathcal{L}_{I}(\phi)\right]
\end{equation}	
where $\phi=\frac{\gamma_i}{p_i \ell(r_{ii})}$ and $\mathcal{L}_{X}(s) = \mathbb{E}(e^{-sX})$ is the Laplace transform.
\end{proposition}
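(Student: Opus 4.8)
The plan is to reduce the decoding event to a threshold condition on the desired channel gain, condition on the interference, and then recognize the resulting mixed moments as derivatives of the Laplace transform. First I would rewrite the event: since $\mathcal{D}^{\{i,j\}}_i = \left\lbrace |h_{ii}|^2 \ell(r_{ii}) p_i / (1+I) \geq \gamma_i \right\rbrace$, clearing the denominator gives the equivalent event $\left\lbrace |h_{ii}|^2 \geq \phi(1+I) \right\rbrace$ with $\phi = \gamma_i / (p_i \ell(r_{ii}))$. Conditioning on $I$, which is independent of $|h_{ii}|^2$, and using the complementary CDF implied by the assumed form, $1 - F_{|h_{ii}|^2}(x) = e^{-x}\sum_{k\in\mathcal{K}} a_k x^k$, I would substitute $x = \phi(1+I)$ and pull the deterministic factor $e^{-\phi}$ out of the expectation to obtain
\[
\mathrm{Pr}\left(\mathcal{D}^{\{i,j\}}_i\right) = \mathbb{E}_I\left[1 - F_{|h_{ii}|^2}(\phi(1+I))\right] = e^{-\phi}\sum_{k\in\mathcal{K}} a_k \phi^k\, \mathbb{E}_I\left[(1+I)^k e^{-\phi I}\right].
\]

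Second, I would expand $(1+I)^k = \sum_{m=0}^k \binom{k}{m} I^m$ by the binomial theorem, so that the inner expectation becomes a finite sum of mixed moments $\mathbb{E}_I\left[ I^m e^{-\phi I}\right]$. The key step is to identify each moment with a derivative of the Laplace transform: since $\mathcal{L}_I(s) = \mathbb{E}(e^{-sI})$, differentiating $m$ times under the expectation gives $\frac{d^m}{ds^m}\mathcal{L}_I(s) = \mathbb{E}\left[(-I)^m e^{-sI}\right]$, hence $\mathbb{E}_I\left[ I^m e^{-\phi I}\right] = (-1)^m \frac{d^m}{d\phi^m}\mathcal{L}_I(\phi)$. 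Substituting this back, and noting the harmless identity $(-1)^m = (-1)^{2k}(-1)^{-m} = (-1)^{2k-m}$ that matches the author's sign convention, yields exactly (\ref{eq:PSIR}).

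The only point requiring care, rather than a genuine obstacle, is the interchange of differentiation and expectation in the moment identity. This is justified because $\mathcal{K}$ is finite and, for the fading distributions of interest (e.g.\ Rayleigh, where $\mathcal{K}=\{0\}$ and $a_0 = 1$), $\mathcal{L}_I(s)$ is smooth on $(0,\infty)$ with all the invoked moments finite, so a dominated convergence argument legitimizes differentiating through $\mathbb{E}$ on a neighborhood of $\phi$. Everything else reduces to bookkeeping of the binomial sum and the sign, so I would present those as routine.
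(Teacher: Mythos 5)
Your proof is correct and follows essentially the same route as the paper's: both reduce the event to $\{|h_{ii}|^2 \geq \phi(1+I)\}$, invoke the assumed complementary cdf, and identify $\mathbb{E}\bigl[I^m e^{-\phi I}\bigr]$ with $(-1)^m \frac{d^m}{d\phi^m}\mathcal{L}_I(\phi)$. The only cosmetic difference is that you expand $(1+I)^k$ by the binomial theorem before passing to the Laplace transform, whereas the paper writes the whole expectation as $(-\phi)^k\frac{d^k}{d\phi^k}\bigl(e^{-\phi}\mathcal{L}_I(\phi)\bigr)$ and then applies the general Leibniz rule; the two expansions agree term by term, and yours has the minor merit of making the final double sum and the $(-1)^{2k-m}$ sign explicit rather than deferring them to ``some algebraic manipulations.''
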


\begin{proof}
The success probability is given (denoting $W=1+I$)
\begin{eqnarray*}
\mathrm{Pr}\left(\mathcal{D}^{\{i,j\}}_{i}\right)=  \mathrm{Pr}\left(|h_{ii}|^2 \geq \phi W\right)  = \sum_{k\in\mathcal{K}}a_{k}\phi^k \mathbb{E}_I\left(e^{-\phi W}W^k\right)\\
\stackrel{(a)}{=} \sum_{k\in\mathcal{K}}a_{k}(-\phi)^k \frac{d^k}{d\phi^k} \mathcal{L}_{W}(\phi) = \sum_{k\in\mathcal{K}}a_{k}(-\phi)^k \frac{d^k}{d\phi^k} (e^{-\phi}\mathcal{L}_{I}(\phi))
\end{eqnarray*}             
where (a) uses the Laplace transform property $y^kf(y)\longleftrightarrow (-1)^k\frac{\mathrm{d}^k}{\mathrm{d}s^k}\mathcal{L}[f(y)](s)$, with $f(y)$ denoting the probability density function of $I$. The final result is obtained using the general Leibniz rule for the higher order derivatives of products of two functions and after some algebraic manipulations.
\end{proof}

The above result enables us to perform outage analysis for a large of received signal and interference fading distributions, including more importantly a number of multi-antenna techniques.

For Rayleigh fading and single-antenna nodes, the probability that the channel between the $i$-th transmitter and the $i$-th receiver is not in outage when all the sources are active is given by 
\begin{eqnarray} \label{eq:SINR_IAN}
&& \mathrm{Pr}\left(\mathcal{D}^{\{i,j\}}_{i}\right) =\mathrm{Pr} \left\lbrace \mathrm{SINR}_i \geq \gamma_i \right\rbrace = e^{-\phi}\mathcal{L}_{I}(\phi)\nonumber \\ && = \exp \left(- \frac{\gamma_i}{p_{i}\ell(r_{ii})} \right) \left[1+\gamma_i \frac{p_{j}\ell(r_{ji})}{p_{i}\ell(r_{ii})} \right]^{-1} ,\text{ for }i=1,2
\end{eqnarray}
That is a direct application of Proposition \ref{Ps_gen} for $k=0$, $a_k=1$, and $|h_{ji}|^2 \sim \exp(1)$.

The probability that the link $ii$ is not in outage when only $S_i$ is active is given by
\begin{eqnarray} \label{eq:SNR}
\mathrm{Pr}\left(\mathcal{D}^{\{i\}}_{i}\right) & = & \mathrm{Pr} \left\lbrace \mathrm{SNR}_i \geq \gamma_i \right\rbrace \nonumber \\
& = & \mathrm{Pr}\left\lbrace |h_{ii}|^2 \ell(r_{ii}) p_i \geq \gamma_i\right\rbrace = \exp \left(-\frac{\gamma_i}{p_{i}\ell(r_{ii})}\right).
\end{eqnarray}

Substituting (\ref{eq:SNR}), (\ref{eq:SINR_IAN}) to (\ref{eq:R_1}), (\ref{eq:R_2}), we obtain the stability sub-regions (\ref{eq:R_1_IAN}) and (\ref{eq:R_2_IAN}) respectively, given on the top of next page.
\begin{figure*}
\begin{equation} \label{eq:R_1_IAN}
\mathcal{R}^{\mathrm{IAN}}_1= \left\lbrace  (\lambda_{1},\lambda_{2}): \frac{\lambda_1}{\exp\left(- \frac{\gamma_1}{p_{1}\ell(r_{11})}\right)} + \frac{\gamma_1 \frac{p_2\ell(r_{21})}{p_1\ell(r_{11})} + \gamma_1 \gamma_2 \frac{\ell(r_{12})\ell(r_{21})}{\ell(r_{11})\ell(r_{22})}}{\exp\left(-\frac{\gamma_2}{p_{2}\ell(r_{22})} \right)} \lambda_2 < 1, \lambda_2 < \frac{\exp \left(-\frac{\gamma_2}{p_{2}\ell(r_{22})} \right)}{ \left[1+\gamma_2 \frac{p_{1}\ell(r_{12})}{p_{2}\ell(r_{22})}\right]}  \right\rbrace
\end{equation}
\begin{equation} \label{eq:R_2_IAN}
\mathcal{R}^{\mathrm{IAN}}_2= \left\lbrace  (\lambda_{1},\lambda_{2}): \frac{\lambda_2}{\exp \left(-\frac{\gamma_2}{p_{2}\ell(r_{22})}\right)} + \frac{\gamma_2 \frac{p_1\ell(r_{12})}{p_2\ell(r_{22})} + \gamma_1 \gamma_2 \frac{\ell(r_{12})\ell(r_{21})}{\ell(r_{22})\ell(r_{11})}}{\exp\left(-\frac{\gamma_1}{p_{1}\ell(r_{11})} \right)} \lambda_1 < 1, \lambda_1 < \frac{\exp \left(-\frac{\gamma_1}{p_{1}\ell(r_{11})}\right)}{\left[1+\gamma_1 \frac{p_{2}\ell(r_{21})}{p_{1}\ell(r_{11})}\right]}  \right\rbrace
\end{equation}
\begin{equation}\label{eq:R_1_IAN_C}
\mathcal{R}^{\mathrm{IAN}}_1= \left\lbrace  (\lambda_{1},\lambda_{2}): \frac{\lambda_1}{\mathrm{Pr} \left\lbrace \mathrm{SNR}_1 \geq \gamma_1 \right\rbrace
} + \frac{\mathrm{Pr} \left\lbrace \mathrm{SNR}_1 \geq \gamma_1 \right\rbrace - \mathrm{Pr} \left\lbrace \mathrm{SINR}_1 \geq \gamma_1 \right\rbrace}{\mathrm{Pr} \left\lbrace \mathrm{SNR}_1 \geq \gamma_1 \right\rbrace \mathrm{Pr} \left\lbrace \mathrm{SINR}_2 \geq \gamma_2 \right\rbrace} \lambda_2 < 1, \lambda_2 < \mathrm{Pr} \left\lbrace \mathrm{SINR}_2 \geq \gamma_2 \right\rbrace  \right\rbrace
\end{equation}
\begin{equation}\label{eq:R_2_IAN_C}
\mathcal{R}^{\mathrm{IAN}}_2= \left\lbrace  (\lambda_{1},\lambda_{2}): \frac{\lambda_2}{\mathrm{Pr} \left\lbrace \mathrm{SNR}_2 \geq \gamma_2 \right\rbrace
} + \frac{\mathrm{Pr} \left\lbrace \mathrm{SNR}_2 \geq \gamma_2 \right\rbrace - \mathrm{Pr} \left\lbrace \mathrm{SINR}_2 \geq \gamma_2 \right\rbrace}{\mathrm{Pr} \left\lbrace \mathrm{SNR}_2 \geq \gamma_2 \right\rbrace \mathrm{Pr} \left\lbrace \mathrm{SINR}_1 \geq \gamma_1 \right\rbrace} \lambda_1 < 1, \lambda_1 < \mathrm{Pr} \left\lbrace \mathrm{SINR}_1 \geq \gamma_1 \right\rbrace  \right\rbrace
\end{equation}
\end{figure*}
The $\mathcal{R}^{\mathrm{IAN}}_1$ and $\mathcal{R}^{\mathrm{IAN}}_2$ can be presented in a more compact form given in (\ref{eq:R_1_IAN_C}) and (\ref{eq:R_2_IAN_C}), respectively.

The stability region when both receivers can decode their messages by treating interference as noise is $\mathcal{R}^{\mathrm{IAN}} = \mathcal{R}^{\mathrm{IAN}}_1 \cup \mathcal{R}^{\mathrm{IAN}}_2$.

Substituting (\ref{eq:SNR}) and (\ref{eq:SINR_IAN}) to (\ref{eq:convexity_general}) we have that $\mathcal{R}^{\mathrm{IAN}}$ is a convex set when $\frac{\mathrm{Pr}\left\lbrace\mathrm{SINR}_1 \geq \gamma_1 \right\rbrace}{\mathrm{Pr}\left\lbrace\mathrm{SNR}_1 \geq \gamma_1\right\rbrace} + \frac{\mathrm{Pr}\left\lbrace\mathrm{SINR}_2 \geq \gamma_2 \right\rbrace}{\mathrm{Pr}\left\lbrace\mathrm{SNR}_2 \geq \gamma_2 \right\rbrace} \geq 1$, which leads to the condition

\begin{equation} \label{eq:convexity_IAN}
\gamma_1 \gamma_2 \leq \frac{\ell(r_{11})\ell(r_{22})}{\ell(r_{12})\ell(r_{21})}.
\end{equation}

The $\mathcal{R}^{\mathrm{IAN}}$ when (\ref{eq:convexity_IAN}) holds, is depicted in Fig.~\ref{fig:SIC_vs_IAN}.

\begin{figure}[t]
\centering
\includegraphics[scale=0.65]{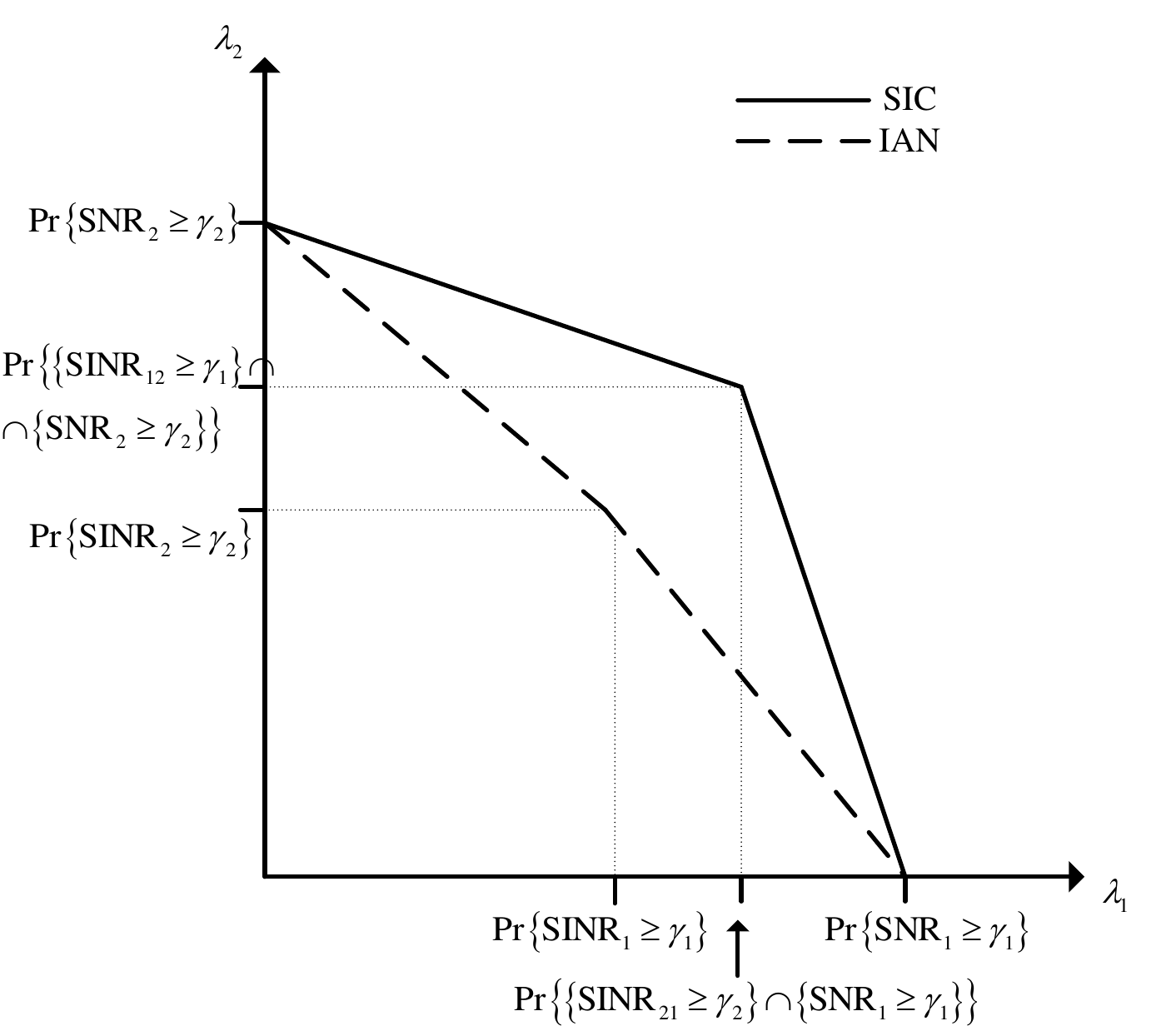}
\caption{The stability region for the case where both receivers apply IAN and SIC. The condition (\ref{eq:SIC_optimal_condition0}) holds for both receivers.}
\label{fig:SIC_vs_IAN}
\end{figure}

\subsection{The effect of Multiple Antennas at the Transmitters}
We consider here that the transmitters have $M$ transmit antennas each, and the destinations have a single receive antenna. The problem setup constitutes a multiple-input single-output (MISO) IC. We focus on single-stream transmission (scalar coding followed by beamforming) that is simple yet optimal under certain conditions (e.g., under perfect CSIT and IAN). 

We consider two simple and widely used  strategies for choosing beamformers.

\subsubsection{Maximum ratio transmission beamforming} 
The first is the maximum ratio transmission beamforming, which maximizes the mutual information (rate) of each transmitter. Note that this beamforming strategy has been shown to achieve a unique, pure Nash equilibrium, i.e., no single transmitter can improve its rate by switching to a different beamformer.
In that case, the unit-norm beamforming vectors are $\mathbf{w}_i^{{\rm MRT}} = \frac{\mathbf{h}_{ii}^{H}}{\left\|\mathbf{h}_{ii}\right\|}$. The received signal is therefore $|\mathbf{h}_{ii}^H\mathbf{w}_i^{{\rm MRT}}|^2 = \left\|\mathbf{h}_{ii}\right\|^2$ and follows a gamma distribution with shape parameter $M$ and scale parameter 1, i.e., $ \left\|\mathbf{h}_{ii}\right\|^2 \sim \Gamma(M,1)$. The interference fading distribution is $|\mathbf{h}_{ji}^H\mathbf{w}_j^{{\rm MRT}}|^2$ and is exponentially distributed. The probability of successful transmission is given by Proposition \ref{Ps_gen} for $k=M$ and $a_k=1/k!$ as follows:
\begin{equation} \label{eq:SINR_miso}
\mathrm{Pr}\left(\mathcal{D}^{\{i,j\}}_{i}\right) = \sum_{k=0}^{M-1} \sum_{m=0}^{k}\binom{k}{m}(-1)^{2k-m}\frac{\phi^ke^{-\phi}}{k!} \frac{d^m}{d\phi^m} \frac{1}{1+\phi p_{j}\ell(r_{ji})}.
\end{equation}

The probability that the link $ii$ is not in outage when only $S_i$ is active is given by
\begin{equation} \label{eq:SNR_miso}
\mathrm{Pr}\left(\mathcal{D}^{\{i\}}_{i}\right) = \sum_{k=0}^{M-1} \left(\frac{\gamma_i}{p_i\ell(r_{ii})}\right)^k\cdot \frac{1}{k!}\cdot \exp\left(-\frac{\gamma_i}{p_i\ell(r_{ii})}\right).
\end{equation}

\subsubsection{Zero forcing beamforming} 
The zero-forcing strategy aims at eliminating co-channel interference in order to maximize the mutual information (rate) of other users. In other words, $S_i$ employs a transmit strategy that creates no interference at all for $D_j$, and vice versa. The unit-norm beamforming vector is given by
\begin{equation} 
\mathbf{w}_i^{{\rm ZF}} = \frac{\Pi_{\mathbf{h}_{ij}}^{\perp} \mathbf{h}_{ii}}{\left\|\Pi_{\mathbf{h}_{ij}}^{\perp} \mathbf{h}_{ii}\right\|}
\end{equation}
where $\Pi_{\mathbf{A}}^{\perp}$ denotes projection onto the orthogonal complement of the space of $\mathbf{A}$. For the received signal we have $|\mathbf{h}_{ii}^H\mathbf{w}_i^{{\rm ZF}}|^2 \sim \Gamma(M-1,1)$ and interference is eliminated.

The success probability between the $i$-th transmitter and the $i$-th receiver, when both sources are active is given by
\begin{eqnarray}\label{eq:SINR_ZF}
&&\mathrm{Pr}\left(\mathcal{D}^{\{i,j\}}_{i}\right) =    \mathrm{Pr}\left\lbrace |\mathbf{h}_{ii}^H\mathbf{w}_i^{{\rm ZF}}|^2 p_i \ell(r_{ii}) \geq \gamma_i \right\rbrace \nonumber \\
&& =   \sum_{k=0}^{M-2} \left(\frac{\gamma_i}{p_i\ell(r_{ii})}\right)^k\cdot \frac{1}{k!}\cdot \exp\left(-\frac{\gamma_i}{p_i\ell(r_{ii})}\right).
\end{eqnarray}

Note that a linear combination of MRT and ZF beamformers, which attempts to find the best tradeoff of the two, is another interesting beamforming scheme that is shown to provide a complete description of the Pareto boundary for the achievable rate of MISO two-user interference channel \cite{MISO_Pareto}.

\section{SIC at both receivers} \label{sec:SIC_region}
If the received interference is strong, successive interference cancellation may be used at the destination. 
In this section we consider the case where both receivers employ successive interference cancellation when both transmitters are active. Assuming that the receiver $D_j$ knows the codebook of $S_i$, it can perform SIC: $D_j$ can decode the message sent by $S_i$ first and then subtract it from the received signal, thus eliminating the interference term before decoding its own message. 
Note that despite its appealing optimality in the strong interference regime, SIC is associated with several challenging implementation issues, including timing- and frequency synchronization, assumption on knowing the modulation and coding of each other.

Receiver $D_i$ is able to decode the transmitted packet by $S_i$ when both sources are active, if the following conditions are satisfied

\begin{align}
\gamma_j \leq \frac{|h_{ji}|^2 \ell(r_{ji}) p_j}{1+|h_{ii}|^2 \ell(r_{ii}) p_i} \triangleq \mathrm{SINR}_{ji}\text{ and } \gamma_i \leq \mathrm{SNR}_i.
\end{align}

Thus, the event $\mathcal{D}^{\{i,j\}}_{i}$ is given by $\mathcal{D}^{\{i,j\}}_{i} = \left\lbrace \mathrm{SINR}_{ji} \geq \gamma_j \right\rbrace \cap \left\lbrace \mathrm{SNR}_i \geq \gamma_i \right\rbrace$.
The following lemma provides the probability that $D_i$ can decode the transmitted information from $S_i$ given that both sources all active.

\begin{lemma}
The success probability between the $i$-th transmitter and the $i$-th receiver, when both sources are active and the $i$-th receiver performs SIC is given by
\begin{equation}\label{eq:SINR_SIC}
\begin{aligned}
\mathrm{Pr}\left(\mathcal{D}^{\{i,j\}}_{i}\right) =\mathrm{Pr}\left\lbrace \mathrm{SINR}_{ji} \geq \gamma_j \cap  \mathrm{SNR}_i \geq \gamma_i  \right\rbrace \\
= \exp \left(-\frac{\gamma_i}{p_{i}\ell(r_{ii})} \right) \cdot \frac{\exp \left(-\frac{\gamma_j(1+\gamma_i)}{p_{j}\ell(r_{ji})} \right)}  {1+\gamma_j \frac{p_{i}\ell(r_{ii})}{p_{j}\ell(r_{ji})}}.
\end{aligned}
\end{equation}
\end{lemma}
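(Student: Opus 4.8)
The plan is to compute the joint probability directly by conditioning on the shared fading gain $|h_{ii}|^2$, which is the only source of coupling between the two decoding constraints. The event $\mathcal{D}^{\{i,j\}}_{i}$ requires simultaneously $|h_{ii}|^2 \ell(r_{ii}) p_i \geq \gamma_i$ (the SNR constraint on the desired link) and, after rearranging $\mathrm{SINR}_{ji}\geq\gamma_j$, the constraint $|h_{ji}|^2 \ell(r_{ji}) p_j \geq \gamma_j\bigl(1 + |h_{ii}|^2 \ell(r_{ii}) p_i\bigr)$ guaranteeing that $S_j$'s codeword can be decoded and cancelled first. The quantity $|h_{ii}|^2$ appears in both inequalities, so the two events are \emph{not} independent even though the fading variables $|h_{ii}|^2$ and $|h_{ji}|^2$ are; correctly handling this dependence is the crux of the argument.

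First I would set $X = |h_{ii}|^2$ and $Y = |h_{ji}|^2$, both unit-mean exponential and independent under Rayleigh fading, and abbreviate $A_i = p_i\ell(r_{ii})$ and $B = p_j\ell(r_{ji})$. Conditioning on $X=x$ and using $\Pr(Y \geq t) = e^{-t}$ for every $t \geq 0$ (the threshold $t = \gamma_j(1 + xA_i)/B$ is always nonnegative, so no truncation of the exponential tail arises), the probability reduces to a single integral over $x \geq \gamma_i/A_i$:
\[
\int_{\gamma_i/A_i}^{\infty} e^{-x}\,\exp\!\Bigl(-\tfrac{\gamma_j(1 + xA_i)}{B}\Bigr)\,dx.
\]

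Next I would factor out $e^{-\gamma_j/B}$ and collect the $x$-dependent terms into a single exponential with rate $\beta = 1 + \gamma_j A_i/B$, obtaining $e^{-\gamma_j/B}\int_{\gamma_i/A_i}^{\infty} e^{-\beta x}\,dx = e^{-\gamma_j/B}\,\beta^{-1} e^{-\beta \gamma_i/A_i}$. Expanding $\beta\gamma_i/A_i = \gamma_i/A_i + \gamma_i\gamma_j/B$ and combining the exponents gives exactly (\ref{eq:SINR_SIC}), with prefactor $\beta^{-1} = \bigl(1 + \gamma_j p_i\ell(r_{ii})/(p_j\ell(r_{ji}))\bigr)^{-1}$. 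The only delicate step is the reduction to a single integral: one must integrate over the shared gain $X$ rather than treat the SINR and SNR events as independent, since it is precisely this conditioning that captures the coupling through the interference-plus-noise denominator; once that is in place, the remaining computation is an elementary exponential integral.
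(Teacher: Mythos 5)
Your proposal is correct and follows essentially the same route as the paper's proof: both condition on the shared gain $|h_{ii}|^2$, write the joint probability as an integral of the complementary CDF of $|h_{ji}|^2$ over $x \geq \gamma_i/(p_i\ell(r_{ii}))$, and evaluate the resulting elementary exponential integral. The algebra checks out, including the combination of exponents that produces the $\gamma_j(1+\gamma_i)$ term.
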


\begin{proof}
Without loss of generality, we provide the proof for the success probability of the first link. Transmission to the first receiver, when both sources are active, is successful if $\left\lbrace \mathrm{SINR}_{ji} \geq \gamma_j \right\rbrace \cap \left\lbrace \mathrm{SNR}_i \geq \gamma_i \right\rbrace$, which is equivalent to

\begin{equation}
\left\lbrace \frac{|h_{21}|^2 \ell(r_{21}) p_2}{1+|h_{11}|^2 \ell(r_{11}) p_1} \geq \gamma_2 \right\rbrace \cap \left\lbrace p_1 |h_{11}|^2 \ell(r_{11}) \geq \gamma_1 \right\rbrace,
\end{equation}
which can written as
\begin{equation}
\left\lbrace |h_{21}|^2 \geq \frac{\gamma_2+ \gamma_2|h_{11}|^2 \ell(r_{11}) p_1}{p_2 \ell(r_{21})} \right\rbrace \cap \left\lbrace  |h_{11}|^2 \geq \frac{\gamma_1}{p_1 \ell(r_{11})} \right\rbrace.
\end{equation}

The success probability can be expressed as
\begin{eqnarray}
&&\mathrm{Pr} \left[\left\lbrace |h_{21}|^2 \geq \frac{\gamma_2+ \gamma_2|h_{11}|^2 \ell(r_{11}) p_1}{p_2 \ell(r_{21})} \right\rbrace \cap \left\lbrace  |h_{11}|^2 \geq \frac{\gamma_1}{p_1 \ell(r_{11})} \right\rbrace  \right] \nonumber \\ 
&& = \int_{\frac{\gamma_1}{p_1 \ell(r_{11})}}^{\infty} \! \mathrm{Pr} \left[ |h_{21}|^2 \geq \frac{\gamma_2+ \gamma_2 \ell(r_{11}) p_1 x}{\ell(r_{21}) p_2} \;\middle|\; |h_{11}|^2=x \right] f_{|h_{11}|^2}(x)\mathrm{d}x.
\end{eqnarray}

Then, we have that the success probability can be written as
\begin{equation}
\mathrm{Pr}\left(\mathcal{D}^{\{1,2\}}_{1}\right) = \int_{\frac{\gamma_1}{p_1 \ell(r_{11})}}^{\infty} \! \overbar{F}_{|h_{21}|^2} \left( \frac{\gamma_2+ \gamma_2 \ell(r_{11}) p_1 x}{\ell(r_{21}) p_2}  \right)  f_{|h_{11}|^2}(x) \mathrm{d}x.
\end{equation}
where $\overbar{F}(x) = 1 - F(x)$ is the complementary cdf.

For Rayleigh fading, we have $f_{|h_{11}|^2}(x)= e^{-x}$ and $F_{|h_{11}|^2} (x) = 1-e^{-x}$, hence the success probability is given by
\begin{equation}
\mathrm{Pr}\left(\mathcal{D}^{\{1,2\}}_{1}\right) = \exp \left(- \frac{\gamma_1}{p_{1}\ell(r_{11})} \right) \frac{ \exp \left(- \frac{\gamma_2(1+\gamma_1)}{p_{2} \ell(r_{21})} \right)}  {\left[1+\gamma_2 \frac{p_{1}\ell(r_{11})}{p_{2}\ell(r_{21})} \right]}.
\end{equation}

\end{proof}

The probability $\mathrm{Pr}(\mathcal{D}^{\{i\}}_{i}) = \mathrm{Pr}(\mathrm{SNR}_i \geq \gamma_i)$ is given by (\ref{eq:SNR}).
Substituting (\ref{eq:SNR}) and (\ref{eq:SINR_SIC}) to (\ref{eq:R_1}) and (\ref{eq:R_2}) we obtain that the subregions are (\ref{eq:R_1_SIC_C}) and (\ref{eq:R_2_SIC_C}).

\begin{figure*}
\begin{equation}\label{eq:R_1_SIC_C}
\begin{aligned}
\mathcal{R}^{\mathrm{SIC}}_1= \left\lbrace  (\lambda_{1},\lambda_{2}): \frac{\lambda_1}{\mathrm{Pr} \left\lbrace \mathrm{SNR}_1 \geq \gamma_1 \right\rbrace
} + \frac{1-\mathrm{Pr}\left\lbrace \mathrm{SINR}_{21} \geq \gamma_2 \mid \mathrm{SNR}_1 \geq \gamma_1 \right\rbrace}{\mathrm{Pr}\left\lbrace \mathrm{SINR}_{12} \geq \gamma_1 \cap \mathrm{SNR}_2 \geq \gamma_2 \right\rbrace} \lambda_2 < 1, \right. \\
 \left. \lambda_2 < \mathrm{Pr}\left\lbrace \mathrm{SINR}_{12} \geq \gamma_1 \cap  \mathrm{SNR}_2 \geq \gamma_2 \right\rbrace  \right\rbrace
\end{aligned}
\end{equation}
\begin{equation}\label{eq:R_2_SIC_C}
\begin{aligned}
\mathcal{R}^{\mathrm{SIC}}_2= \left\lbrace  (\lambda_{1},\lambda_{2}): \frac{\lambda_2}{\mathrm{Pr} \left\lbrace \mathrm{SNR}_2 \geq \gamma_2 \right\rbrace
} + \frac{1-\mathrm{Pr}\left\lbrace \mathrm{SINR}_{12} \geq \gamma_1  \mid \mathrm{SNR}_2 \geq \gamma_2  \right\rbrace}{\mathrm{Pr}\left\lbrace \mathrm{SINR}_{21} \geq \gamma_2  \cap  \mathrm{SNR}_1 \geq \gamma_1  \right\rbrace} \lambda_1 < 1, \right. \\
 \left. \lambda_1 < \mathrm{Pr}\left\lbrace \mathrm{SINR}_{21} \geq \gamma_2  \cap  \mathrm{SNR}_1 \geq \gamma_1  \right\rbrace  \right\rbrace
\end{aligned}
\end{equation}
\end{figure*}

The stability region $\mathcal{R}^{\mathrm{SIC}} = \mathcal{R}^{\mathrm{SIC}}_1 \cup \mathcal{R}^{\mathrm{SIC}}_2 $ is a convex set if 
\begin{align*}
\mathrm{Pr}\left\lbrace \mathrm{SINR}_{21} \geq \gamma_2 \mid \mathrm{SNR}_1 \geq \gamma_1 \ \right\rbrace + \\ \mathrm{Pr}\left\lbrace \mathrm{SINR}_{12} \geq \gamma_1 \mid \mathrm{SNR}_2 \geq \gamma_2 \ \right\rbrace \geq 1.
\end{align*}

The $\mathcal{R}^{\mathrm{SIC}}$ is depicted in Fig.~\ref{fig:SIC_vs_IAN} when is a convex set.

\subsection{SIC vs. IAN}
A natural question is under which conditions SIC is better than IAN in the sense that $R^{\mathrm{IAN}} \subset R^{\mathrm{SIC}}$.
Comparing $R^{\mathrm{IAN}}$ and $R^{\mathrm{SIC}}$, we observe that SIC provides better performance when the following condition is satisfied for both receivers (see also Fig.~\ref{fig:SIC_vs_IAN}):

\begin{equation}\label{eq:SIC_optimal_condition0}
\mathrm{Pr} \left\lbrace \mathrm{SINR}_{i} \geq \gamma_i \right\rbrace <\mathrm{Pr}\left\lbrace \mathrm{SINR}_{ji} \geq \gamma_j  \cap \mathrm{SNR}_i \geq \gamma_i \ \right\rbrace ,
\end{equation}
which leads to the following condition after substitution:
\begin{equation} \label{eq:SIC_optimal_condition}
\frac{1+\gamma_j \frac{p_{i}\ell(r_{ii})}{p_{j}\ell(r_{ji})}}{1+\gamma_i \frac{p_{j}\ell(r_{ji})}{p_{i}\ell(r_{ii})} } < \exp \left(- \frac{\gamma_j (1+\gamma_i)}{p_{j}\ell(r_{ji})} \right).
\end{equation}

If the condition (\ref{eq:SIC_optimal_condition}) is not satisfied at both receivers, then IAN provides superior performance as compared to SIC. In the case that the condition is not satisfied at $D_i$ but holds for $D_j$, then IAN should be used for $D_i$ and SIC for $D_j$. The stability region for the latter case is investigated in the next section.

\section{SIC at $D_1$ - IAN at $D_2$} \label{sec:SICIAN_region}
Without loss of generality, we consider that the first receiver decodes the interference using SIC (condition (\ref{eq:SIC_optimal_condition}) holds for $D_1$), whereas the second receiver employs IAN, i.e., inequality (\ref{eq:SIC_optimal_condition}) holds with the opposite direction.

For the first destination, which decodes the transmitted message applying SIC, the probabilty of successful event $\mathrm{Pr}\left(\mathcal{D}^{\{1,2\}}_{1}\right) = \mathrm{Pr}\left\lbrace \mathrm{SNR}_1 \geq \gamma_1 \cap \mathrm{SINR}_{21} \geq \gamma_2 \right\rbrace,$ is given by (\ref{eq:SINR_SIC}).

For the second destination, the probability $\mathrm{Pr}\left(\mathcal{D}^{\{1,2\}}_{2}\right) =\mathrm{Pr} \left\lbrace  \mathrm{SINR}_2 \geq \gamma_2 \right\rbrace$ is given by (\ref{eq:SINR_IAN}).
Note that when only $i$-th source transmits, then we need that the $SNR_i$ to be greater than threshold $\gamma_i$.

The stability region is given by $\mathcal{R}^{\mathrm{SIC-IAN}}= \mathcal{R}^{\mathrm{SIC-IAN}}_1 \cup \mathcal{R}^{\mathrm{SIC-IAN}}_2 $ and is shown in Fig.~\ref{fig:SIC_IAN}. The subregions (\ref{eq:R_1_IANSIC}) and (\ref{eq:R_2_IANSIC}) are obtained respectively by substituting (\ref{eq:SINR_IAN}) and (\ref{eq:SINR_SIC}) into (\ref{eq:R_1}) and (\ref{eq:R_2}) for the first and section destination respectively.

\begin{figure*}
\begin{equation} \label{eq:R_1_IANSIC}
\begin{aligned}
\mathcal{R}^{\mathrm{SIC-IAN}}_1= \left\{  (\lambda_{1},\lambda_{2}):
\frac{\lambda_1}{\mathrm{Pr} \left\lbrace\mathrm{SNR}_{1} \geq \gamma_1 \right\rbrace}+ \frac{1-\mathrm{Pr}\left\lbrace  \mathrm{SINR}_{21} \geq \gamma_2 \mid \mathrm{SNR}_1 \geq \gamma_1 \right\rbrace}{\mathrm{Pr}\left\lbrace\mathrm{SINR}_{2} \geq \gamma_2 \right\rbrace}\lambda_2 < 1, \lambda_2 < \mathrm{Pr}\left\lbrace\mathrm{SINR}_{2} \geq \gamma_2 \right\rbrace \right\}
\end{aligned}
\end{equation}
\begin{equation}
\begin{aligned} \label{eq:R_2_IANSIC}
\mathcal{R}^{\mathrm{SIC-IAN}}_2= \left\{  (\lambda_{1},\lambda_{2}): \frac{\lambda_2}{\mathrm{Pr} \left\lbrace\mathrm{SNR}_{2} \geq \gamma_2 \right\rbrace} + \frac{\mathrm{Pr} \left\lbrace\mathrm{SNR}_{2} \geq \gamma_2\right\rbrace-\mathrm{Pr} \left\lbrace\mathrm{SINR}_{2} \geq \gamma_2 \right\rbrace}{\mathrm{Pr}\left\lbrace\mathrm{SNR}_{2} \geq \gamma_2\right\rbrace\ \mathrm{Pr}\left\lbrace  \mathrm{SINR}_{21} \geq \gamma_2  \cap \mathrm{SNR}_1 \geq \gamma_1 \right\rbrace }\lambda_1 < 1, \right. \\
 \left. \lambda_1 < \mathrm{Pr}\left\lbrace \mathrm{SINR}_{21} \geq \gamma_2 \cap \mathrm{SNR}_1 \geq \gamma_1 \right\rbrace \right\}
\end{aligned}
\end{equation}
\end{figure*}

Note that the stability region $\mathcal{R}^{\mathrm{SIC-IAN}}$ is a convex set if $\mathrm{Pr}\left\lbrace \mathrm{SINR}_{21} \geq \gamma_2 \mid \mathrm{SNR}_1 \geq \gamma_1 \right\rbrace + \frac{\mathrm{Pr}\left\lbrace\mathrm{SINR}_2 \geq \gamma_2 \right\rbrace}{\mathrm{Pr}\left\lbrace\mathrm{SNR}_2 \geq \gamma_2 \right\rbrace} \geq 1$.

\begin{figure}[t]
\centering
\includegraphics[scale=0.65]{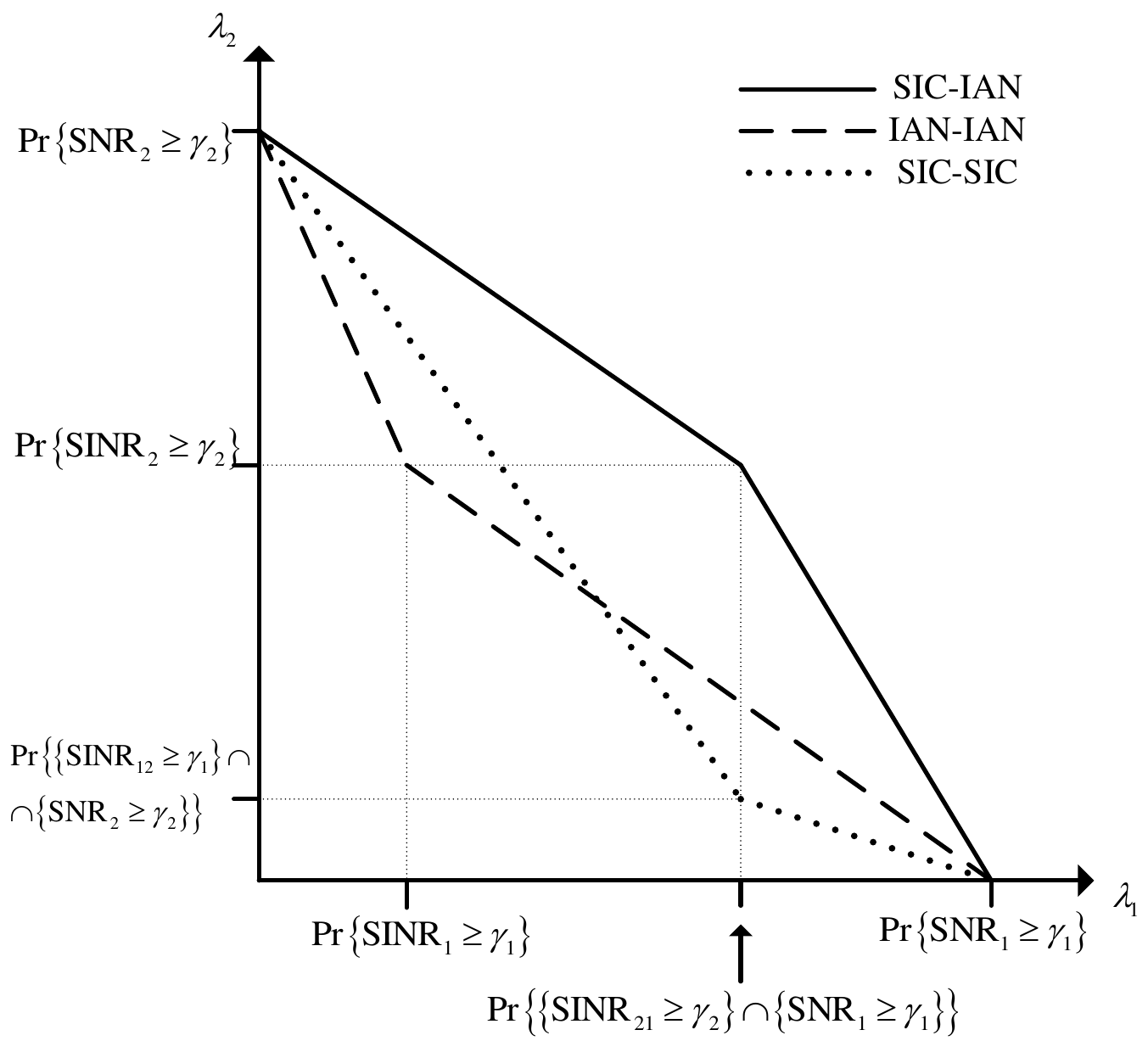}
\caption{Stability region comparison when condition (\ref{eq:SIC_optimal_condition0}) does not hold for both receivers.}
\label{fig:SIC_IAN}
\end{figure}

\section{Stability Region with Random Access} \label{sec:region_RA}

In this section, we extend the previous analysis to the case where the sources transmit in a random access manner. 
Random access is an multiple access protocol whose investigation has been remained vivid for many years. Nevertheless, several fundamental questions remain open even for very simple network configurations. It has recently regained interest with the emergence of massive uncoordinated access and Internet-of-Things (IoT) systems. 

In random access, whenever a source $i = 1,2$, has a non-empty queue transmits a packet with probability $q_i$. Then the service rates for the sources are given by

\begin{align} \label{eq:mu_1_RA}
\mu_1 = q_1 q_2\mathrm{Pr} [Q_2 > 0] \mathrm{Pr}\left(\mathcal{D}^{\{1,2\}}_{1}\right) + \\+ q_1 \left(1-q_2 \mathrm{Pr} [Q_2 = 0] \right) \mathrm{Pr}\left(\mathcal{D}^{\{1\}}_{1}\right),
\end{align}

\begin{align} \label{eq:mu_2_RA}
\mu_2 = q_1 q_2\mathrm{Pr} [Q_1 > 0] \mathrm{Pr}\left(\mathcal{D}^{\{1,2\}}_{2}\right) +\\+ q_2 \left(1-q_1\mathrm{Pr} [Q_1 = 0]\right) \mathrm{Pr}\left(\mathcal{D}^{\{2\}}_{2}\right).
\end{align}

Since the queues are coupled, we proceed similarly to Section \ref{sec:general_region} and apply the methodology of dominant systems. The stability region is obtained as $\mathcal{R}^\mathrm{RA}= \mathcal{R}_1^\mathrm{RA} \cup \mathcal{R}_2^\mathrm{RA}$, where  $\mathcal{R}_1^\mathrm{RA}$ and $\mathcal{R}_2^\mathrm{RA}$ are given by (\ref{eq:R_1_RA}) and (\ref{eq:R_2_RA}), respectively and it is depicted in Fig. \ref{fig:RA}. 

\begin{figure*}[!t]
\begin{equation} \label{eq:R_1_RA}
\begin{aligned}
\mathcal{R}_1^\mathrm{RA} = \left\lbrace (\lambda_{1},\lambda_{2}): \frac{\lambda_1}{q_1 \mathrm{Pr}\left(\mathcal{D}^{\{1\}}_{1}\right)} + \frac{\left[\mathrm{Pr}\left(\mathcal{D}^{\{1\}}_{1}\right) - \mathrm{Pr}\left(\mathcal{D}^{\{1,2\}}_{1}\right) \right] \lambda_2}{\mathrm{Pr}\left(\mathcal{D}^{\{1\}}_{1}\right)\left[(1-q_1)\mathrm{Pr}\left(\mathcal{D}^{\{2\}}_{2}\right)+q_1\mathrm{Pr}\left(\mathcal{D}^{\{1,2\}}_{2}\right) \right]}   <  1, \right. \\
\left.
\lambda_2 < (1-q_1) q_2\mathrm{Pr}\left(\mathcal{D}^{\{2\}}_{2}\right) + q_1 q_2\mathrm{Pr}\left(\mathcal{D}^{\{1,2\}}_{2}\right)  \right\rbrace
\end{aligned}
\end{equation}
\begin{equation} \label{eq:R_2_RA}
\begin{aligned}
\mathcal{R}_2^\mathrm{RA} = \left\lbrace (\lambda_{1},\lambda_{2}): \frac{\lambda_2}{q_2 \mathrm{Pr}\left(\mathcal{D}^{\{2\}}_{2}\right)} + \frac{\left[\mathrm{Pr}\left(\mathcal{D}^{\{2\}}_{2}\right) - \mathrm{Pr}\left(\mathcal{D}^{\{1,2\}}_{2}\right) \right] \lambda_1}{\mathrm{Pr}\left(\mathcal{D}^{\{2\}}_{2}\right)\left[(1-q_2)\mathrm{Pr}\left(\mathcal{D}^{\{1\}}_{1}\right)+q_2\mathrm{Pr}\left(\mathcal{D}^{\{1,2\}}_{1}\right) \right]}   <  1, \right. \\
\left.
\lambda_1 < q_1(1-q_2)\mathrm{Pr}\left(\mathcal{D}^{\{1\}}_{1}\right) + q_1 q_2\mathrm{Pr}\left(\mathcal{D}^{\{1,2\}}_{1}\right)  \right\rbrace
\end{aligned}
\end{equation}
\end{figure*}

\begin{figure}[t]
\centering
\includegraphics[scale=0.65]{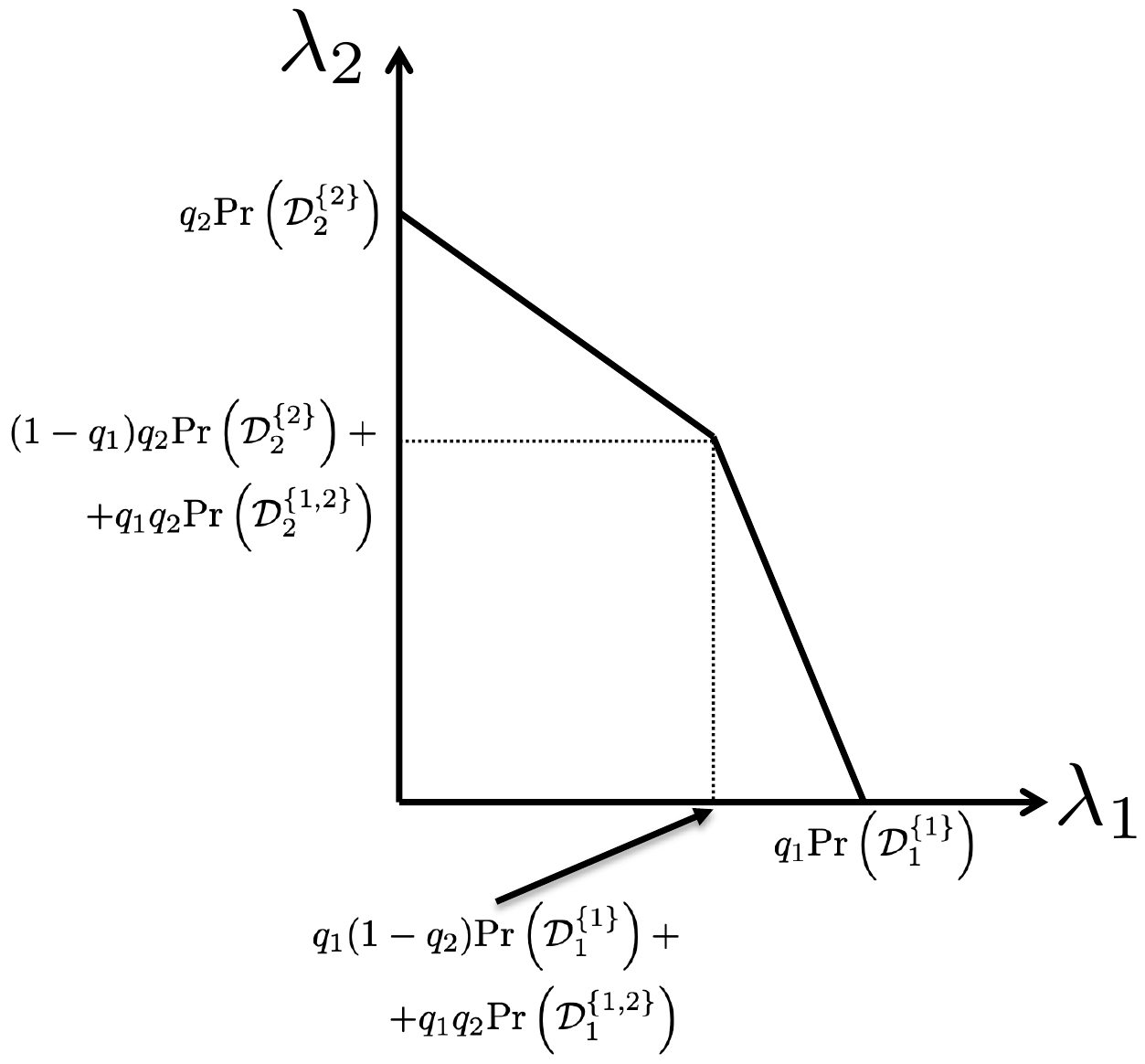}
\caption{Stability region with random access.}
\label{fig:RA}
\end{figure}

In order to compute the success probabilities in the stability region, we can consider the techniques in Sections \ref{sec:IAN_region} - \ref{sec:SICIAN_region}.

\section{Closure of the Stability Region} \label{sec:closure}

In the Sections \ref{sec:general_region} - \ref{sec:SICIAN_region}, we obtained the stability region in terms of success probabilities under the assumption of fixed powers.

If we take the union of these regions over all possible power allocations to the transmitters, we obtain the envelope of the individual stability regions. This corresponds to the closure of the stability region and is defined in (\ref{eq:closure_def}). 

\begin{equation} \label{eq:closure_def}
\mathcal{G}\triangleq \bigcup_{ (p_1, p_2) \in [0,p_\mathrm{max}]^2} \left(  \mathcal{G}_1 (p_1,p_2) \cup \mathcal{G}_2 (p_1,p_2) \right)
\end{equation}

In (\ref{eq:closure_def}), $\mathcal{G}_i (p_1,p_2) \triangleq \mathcal{R}_i$ for $i=1,2$ are obtained Section \ref{sec:general_region} in the general form, and $p_\mathrm{max}$ denotes the maximum power that can be allocated to a transmitter.

In Section \ref{sec:region_RA}, we obtained the stability region when the sources access the medium in a random access manner. We obtained the stability region with fixed transmission probability vectors $(q_{1}, q_{2})$ and with fixed power allocations $(p_{1}, p_{2})$. If we take the union of these regions over all possible transmission probabilities of the users and over all power allocations, we obtain the closure of the stability region defined as

\begin{equation} \label{eq:closure_def_RA_PA}
\begin{array}{rl}
\mathcal{F}\triangleq &\left( \bigcup_{ \boldsymbol{\vec{v}} \in [0,1]^2 \times [0,p_\mathrm{max}]^2} \mathcal{F}_1 (\boldsymbol{\vec{v}}) \right) \\&\bigcup \left( \bigcup_{ \boldsymbol{\vec{v}} \in [0,1]^2 \times [0,p_\mathrm{max}]^2} \mathcal{F}_2 (\boldsymbol{\vec{v}}) \right),
\end{array}
\end{equation}
where $\mathcal{F}_i (\boldsymbol{\vec{v}} ) \triangleq \mathcal{R}_i^\mathrm{RA}$ for $i=1,2$ are obtained in Section \ref{sec:region_RA} and $\boldsymbol{\vec{v}}=(q_{1}, q_{2},p_{1}, p_{2})$.

In general, the analytical derivation of the closure of the stability region is a difficult task. In the next section we will provide numerical evaluation of the closure for several cases.

\section{Numerical Results} \label{sec:results}

In this section, we evaluate numerically the analytical results obtained in the previous sections. We consider a singular power-law pathloss function $\ell(r) = r^{-a}$ where $a$ is the pathloss exponent. Two different topologies of the two-user interference channel are studied : $r_{11}=r_{22}=10$, $r_{12}=r_{21}=5$, and $r_{11}=r_{22}=14$, $r_{12}=15, r_{21}=10$.
We let $a=2$ and $p_{max}=800$. Furthermore, we consider two sets for the SINR thresholds, $\gamma_1=0.5$, $\gamma_2=0.4$ and $\gamma_1=2$, $\gamma_2=1.4$, respectively. 

\subsection{Closure of the stability region over all power allocations} 

\subsubsection{First Topology: $r_{11}=r_{22}=10$, $r_{12}=r_{21}=5$} \label{sec:NumResClosurePower1}

In Fig. \ref{fig:Closure_Case1}  the closure of the stability regions for the case where both the receiver apply IAN, SIC, and for the case where one applies SIC and the second IAN is depicted. The black line in Figs. \ref{fig:Closure_Case1} (a)-(d) corresponds to the stability region for each scheme for the case where both sources transmit with the maximum allowed power $p_{max}$. 

An interesting observation is that for the SIC case, the black line defines the closure of the region which is not the case for IAN. For IAN when the SINR thresholds are less than one, the black line defines the closure for the case where the traffic is low for one transmitter and high for the other. On the other hand, when the SINR thresholds are greater than one, then the black line approaches the closure.

\begin{figure*}[!htbp]
\centering
\subfigure[IAN, $\gamma_1=0.5,\gamma_2=0.4$]{\includegraphics[scale=0.5]{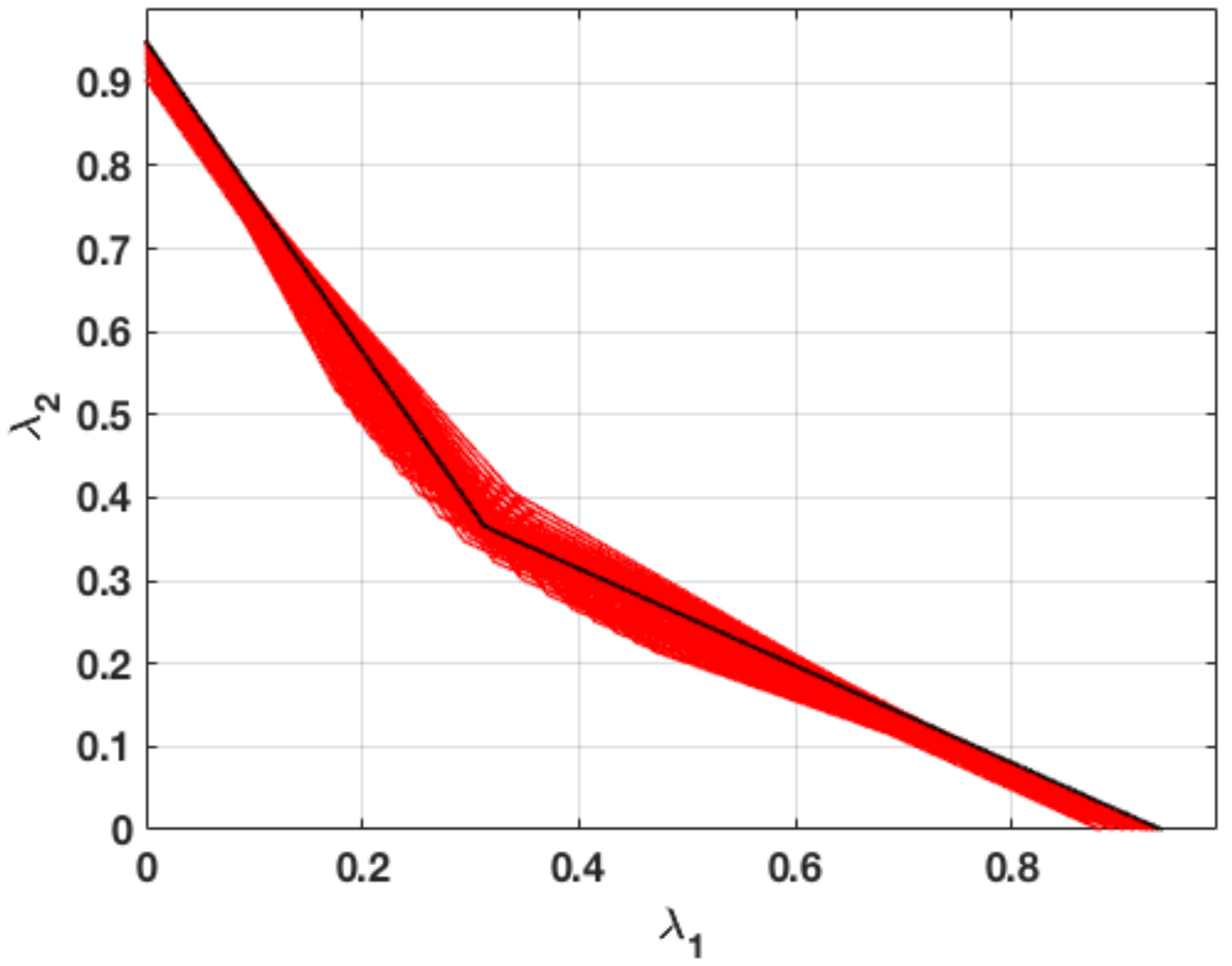}} 
\subfigure[IAN, $\gamma_1=2,\gamma_2=1.4$]{\includegraphics[scale=0.5]{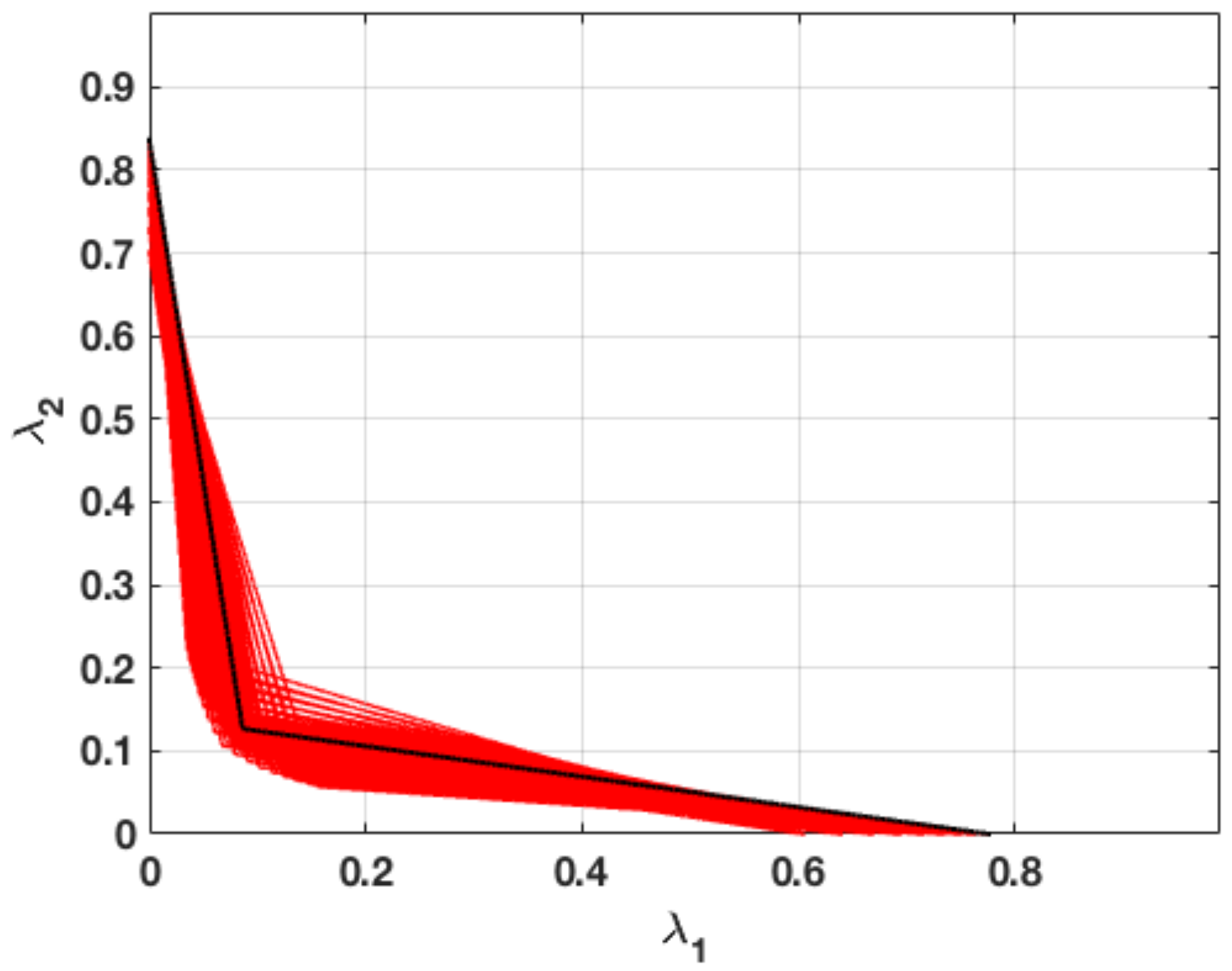}} \\
\subfigure[SIC,$\gamma_1=0.5,\gamma_2=0.4$]{\includegraphics[scale=0.5]{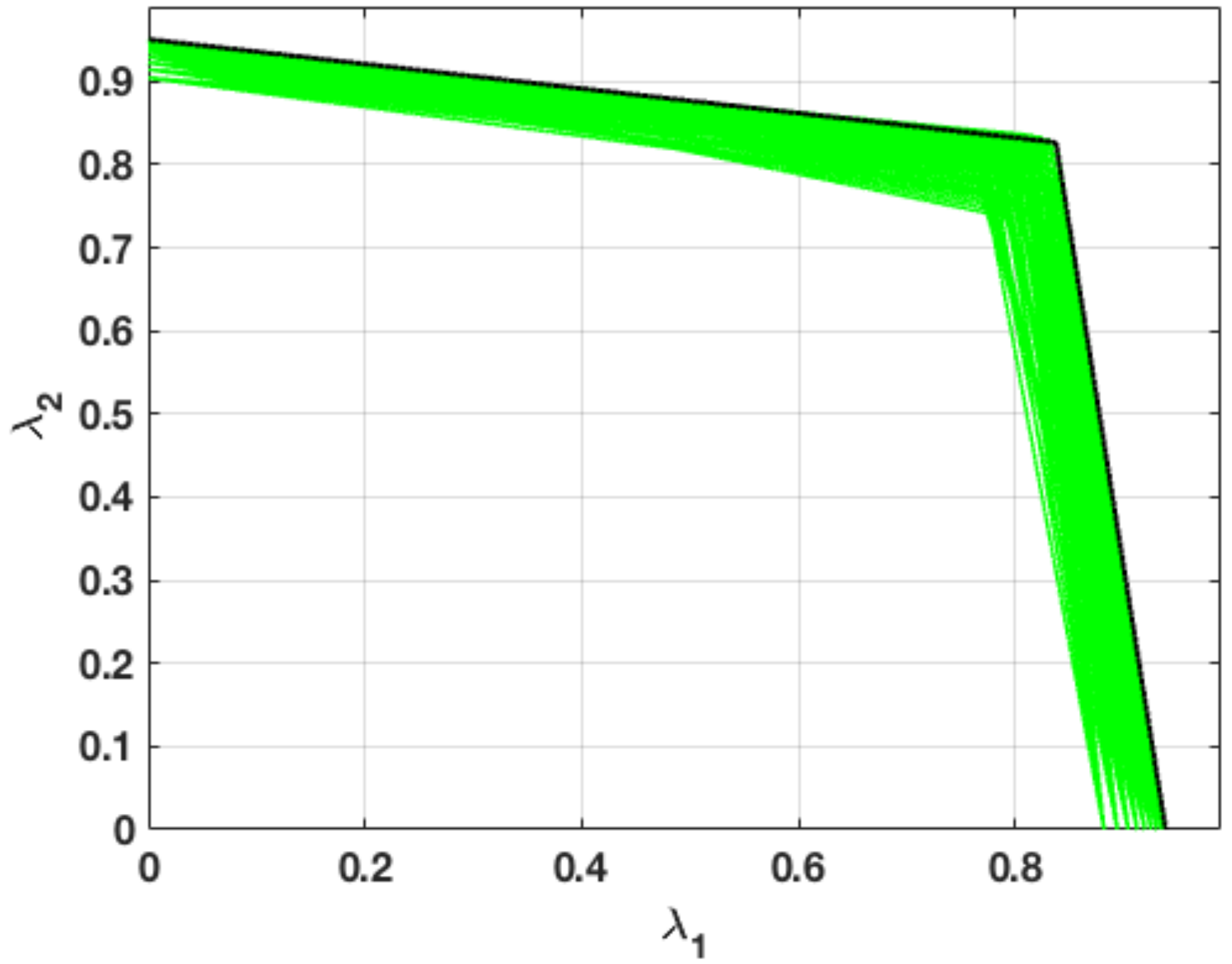}}
\subfigure[SIC,$\gamma_1=2,\gamma_2=1.4$]{\includegraphics[scale=0.5]{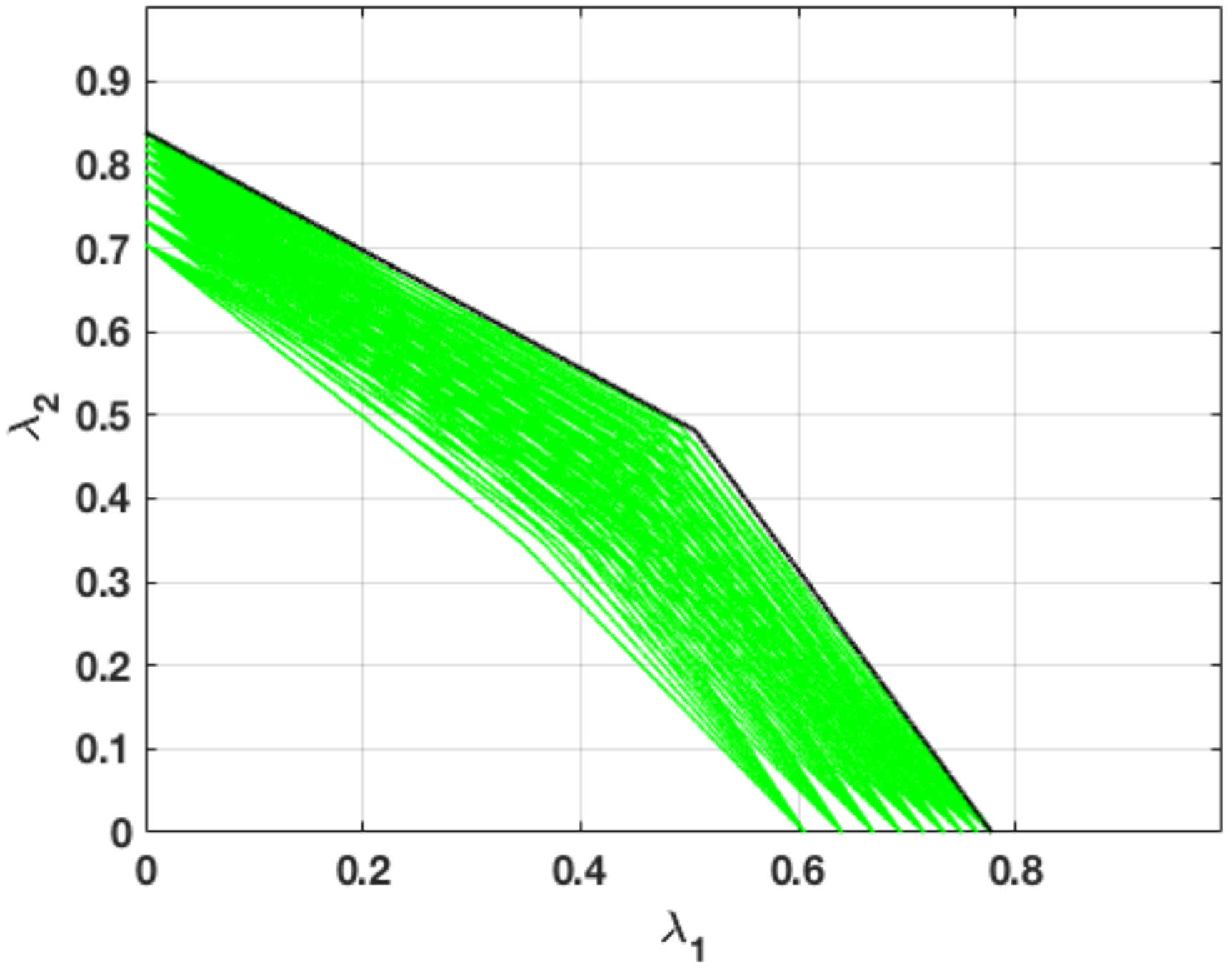}} \\
\subfigure[SIC-IAN,$\gamma_1=0.5,\gamma_2=0.4$]{\includegraphics[scale=0.5]{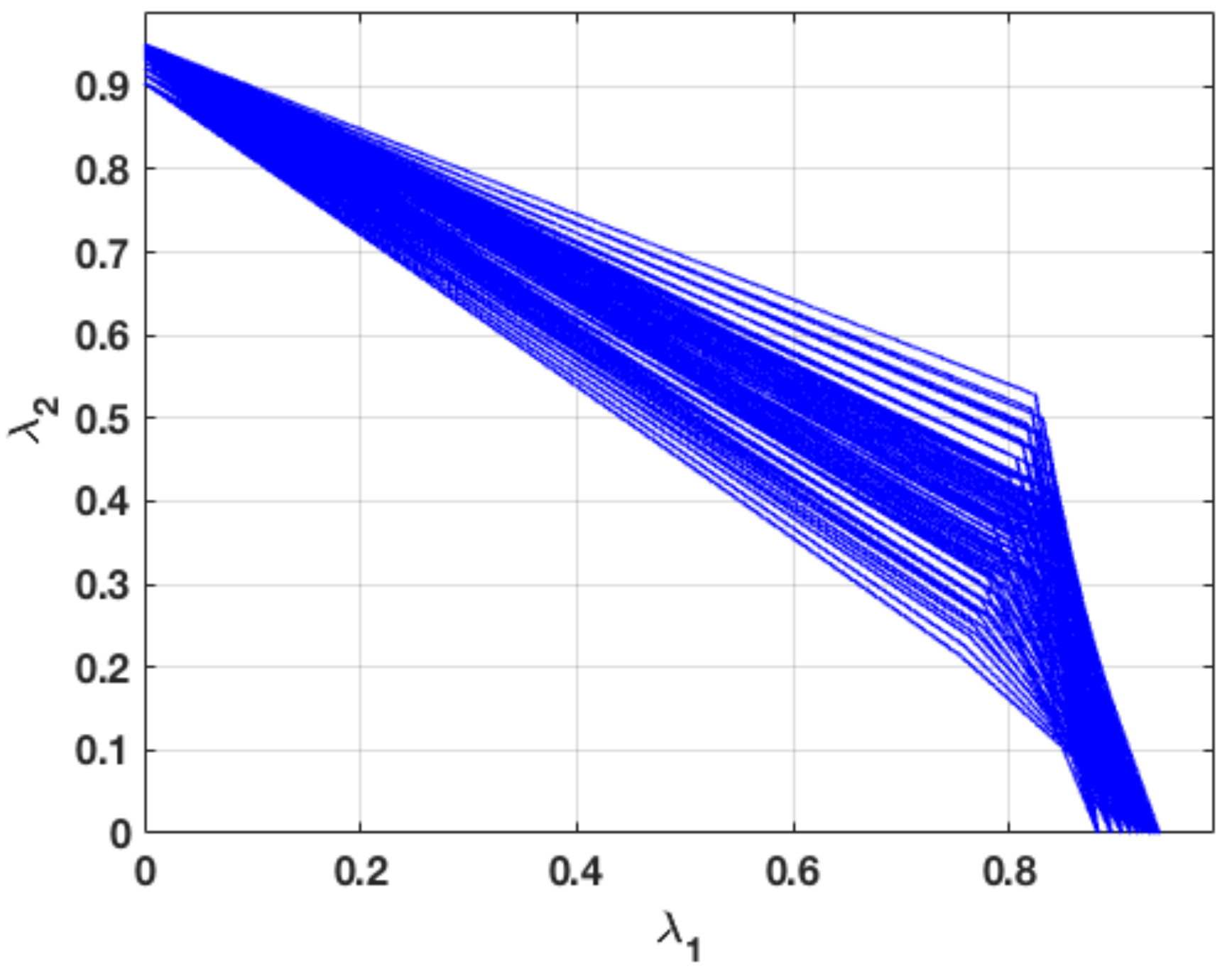}}	
\subfigure[SIC-IAN,$\gamma_1=2,\gamma_2=1.4$]{\includegraphics[scale=0.5]{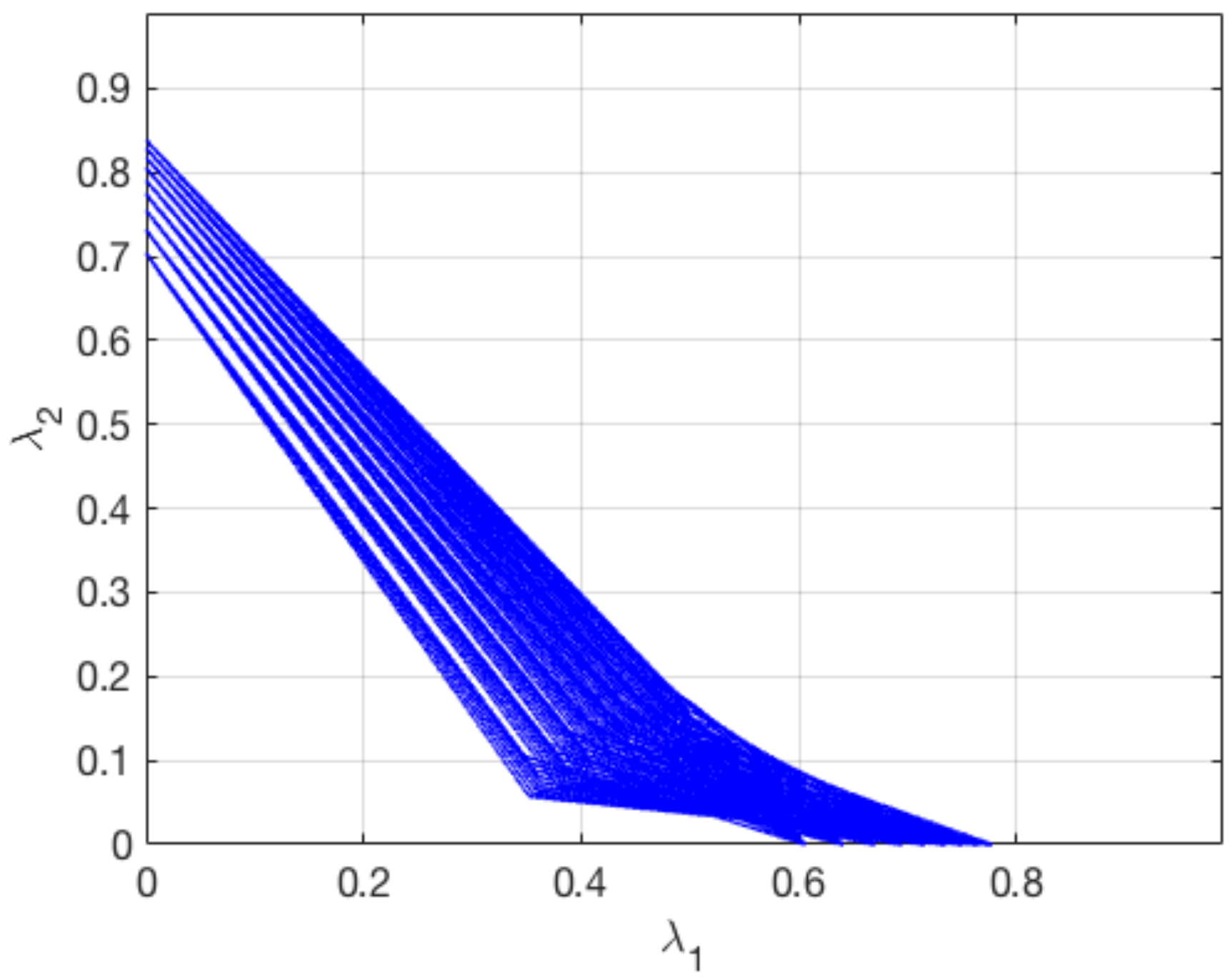}} \\
\caption{The closure of the stability region for the topology where $r_{11}=r_{22}=10, r_{12}=r_{21}=5$.}
\label{fig:Closure_Case1}
\end{figure*}

We observe that SIC has the best performance among the schemes, which is expected, since in this topology, it is easier to decode the received interference first and then decode the intended packet with the SNR criterion.

We observe that for both $\gamma_1=0.5$, $\gamma_2=0.4$ and $\gamma_1=2$, $\gamma_2=1.4$, the closure of the stability region for SIC is a convex set, which establishes its superiority compared to time division. For IAN, we have in both cases a non-convex set, but for the SIC-IAN scheme which is depicted in Fig. \ref{fig:Closure_Case1} (e) and \ref{fig:Closure_Case1} (f), we have a transition from the convex to the non-convex set for high values of the SINR threshold.

\subsubsection{Second Topology: $r_{11}=r_{22}=14$, $r_{12}=15, r_{21}=10$}

In this part, we consider the closure over all power allocations for the three considered schemes and the topology described by $r_{11}=r_{22}=14$, $r_{12}=15, r_{21}=10$. The closures are depicted in Fig. \ref{fig:Closure_Case2} for two cases of SINR thresholds
$\gamma_1=0.5$, $\gamma_2=0.4$ and $\gamma_1=2$, $\gamma_2=1.4$, respectively. 

The black line in Figs. \ref{fig:Closure_Case1} (a)-(d) corresponds to the stability region for each scheme for the case where both sources transmit with the maximum allowed power.

\begin{figure*}[!htbp]
\centering
\subfigure[IAN, $\gamma_1=0.5,\gamma_2=0.4$]{\includegraphics[scale=0.5]{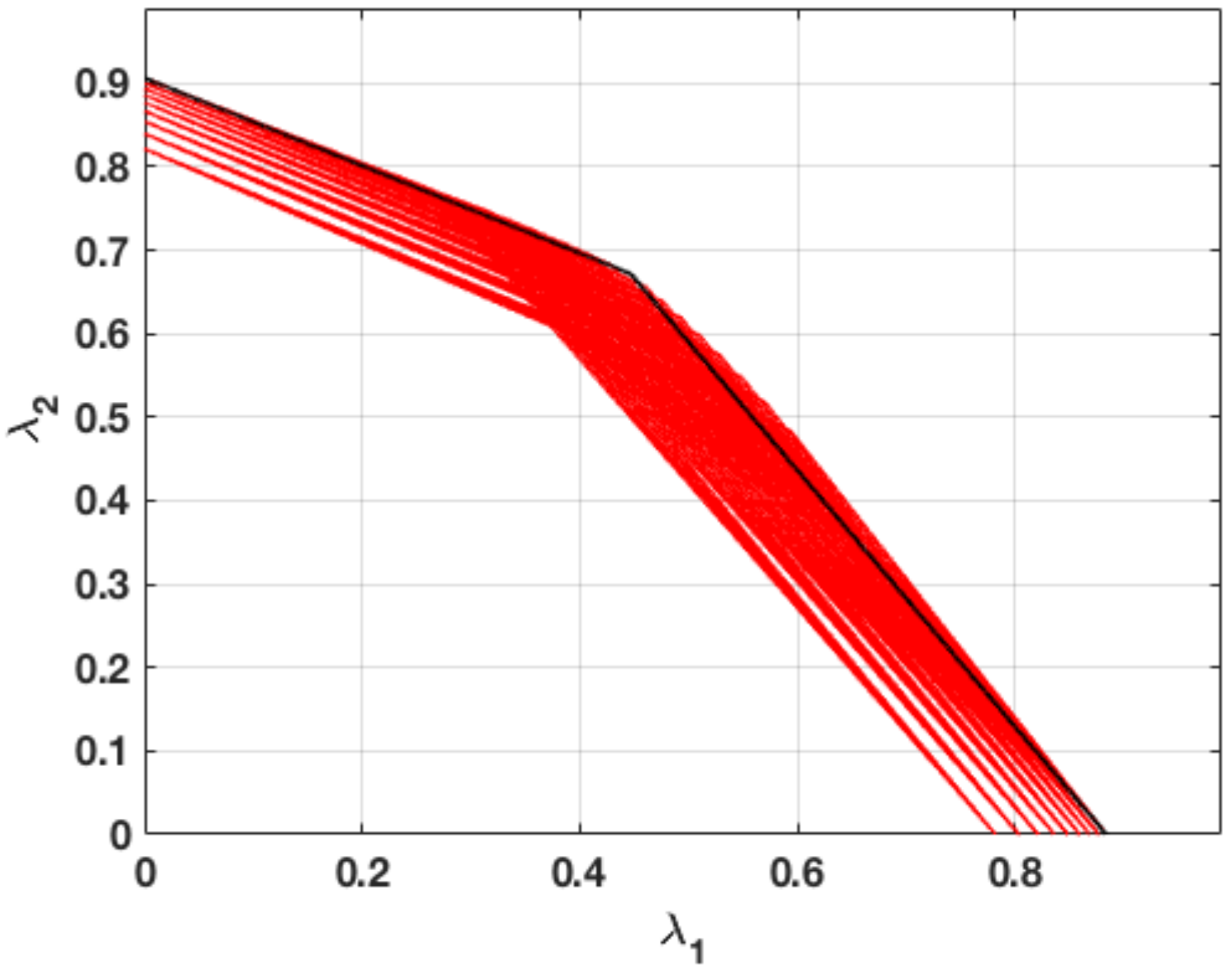}} 
\subfigure[IAN, $\gamma_1=2,\gamma_2=1.4$]{\includegraphics[scale=0.5]{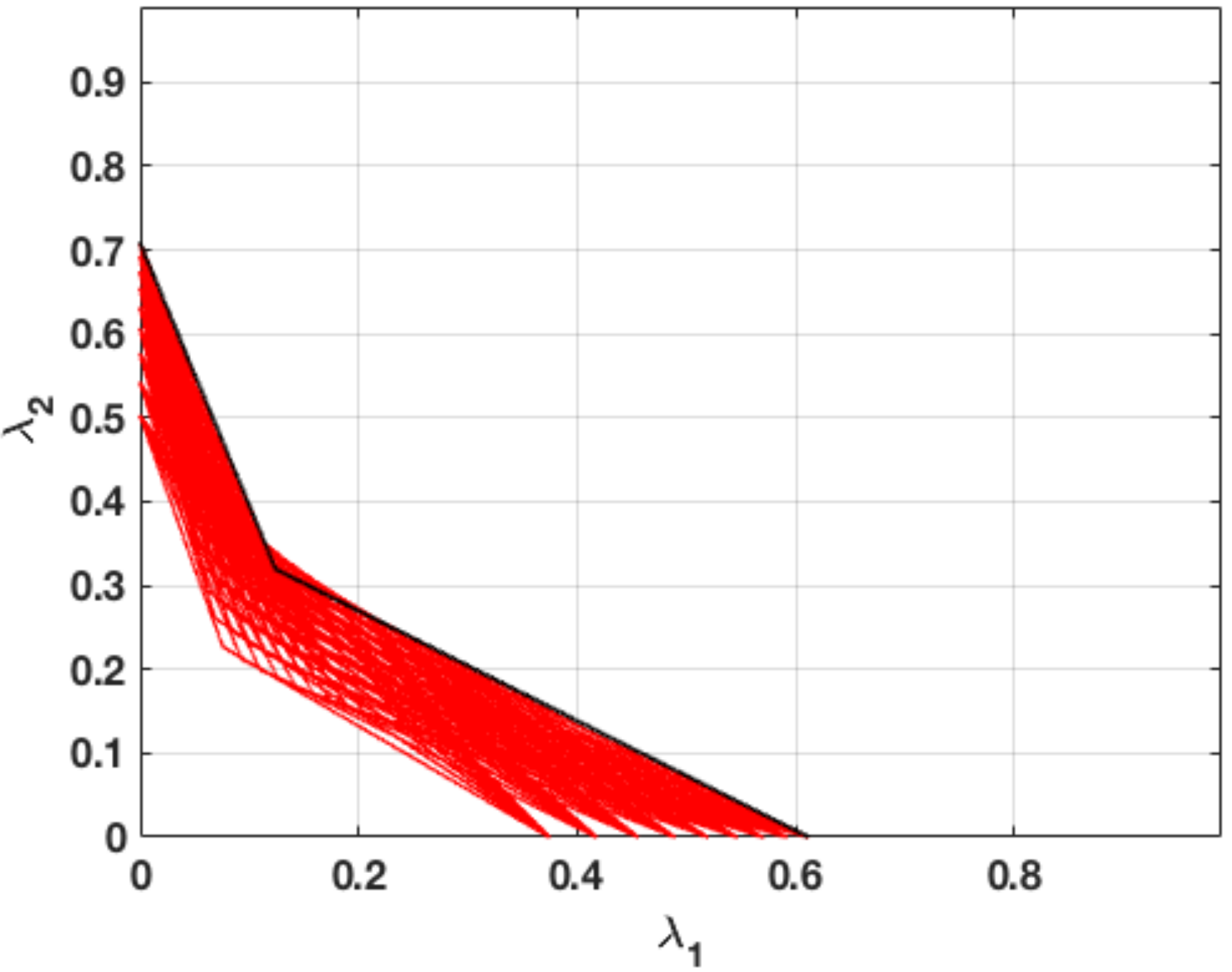}} \\
\subfigure[SIC,$\gamma_1=0.5,\gamma_2=0.4$]{\includegraphics[scale=0.5]{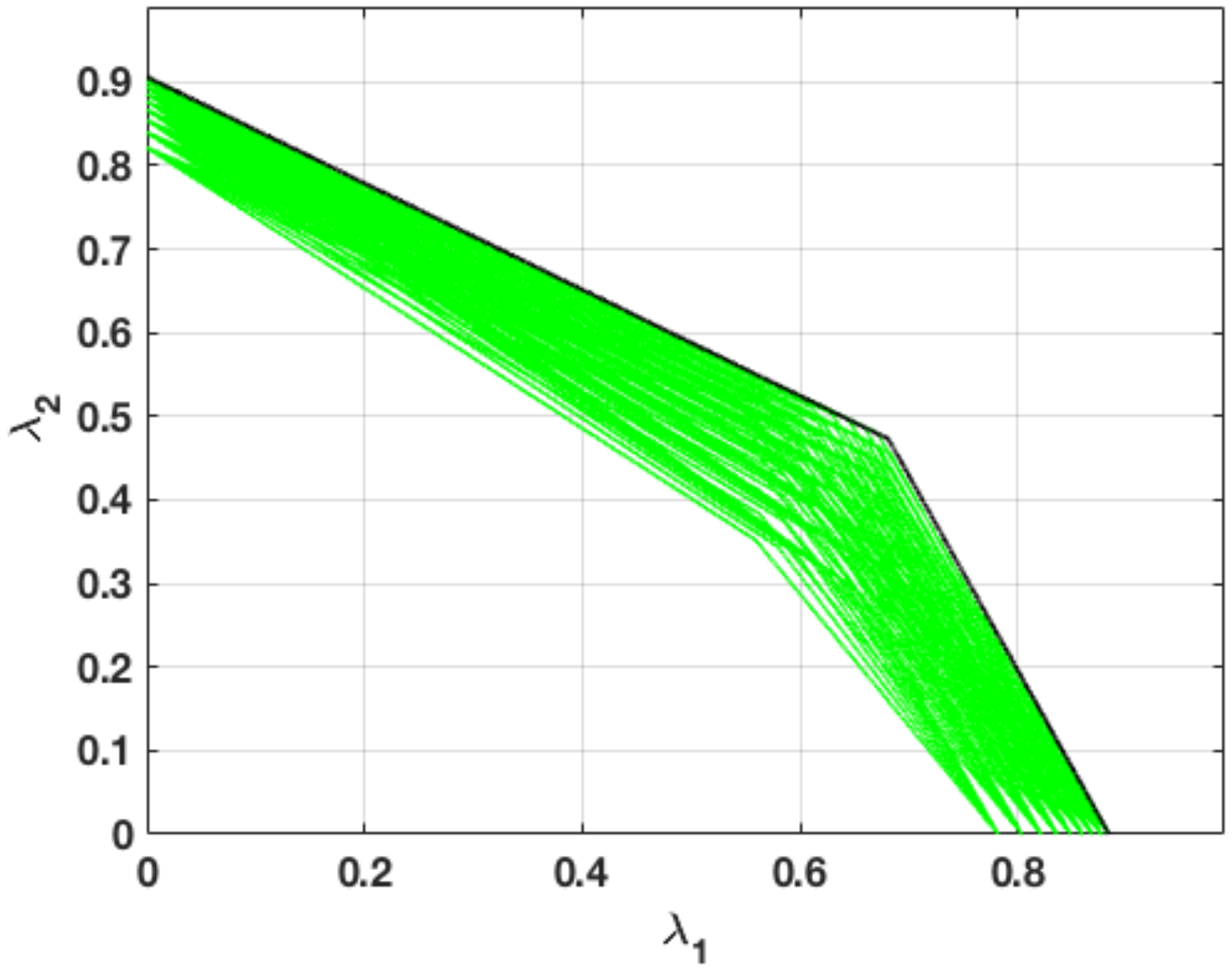}}
\subfigure[SIC,$\gamma_1=2,\gamma_2=1.4$]{\includegraphics[scale=0.5]{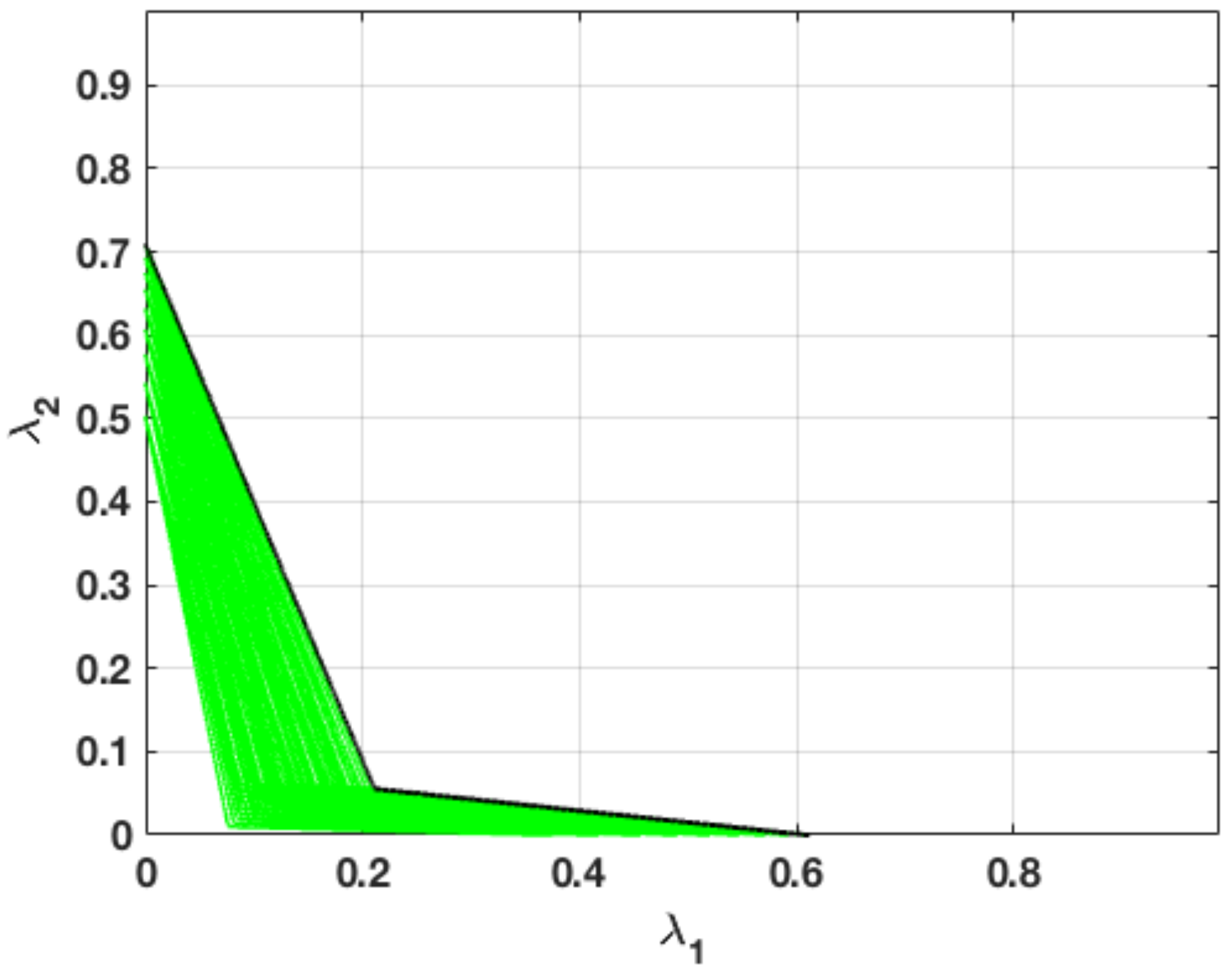}} \\
\subfigure[IAN-SIC,$\gamma_1=0.5,\gamma_2=0.4$]{\includegraphics[scale=0.5]{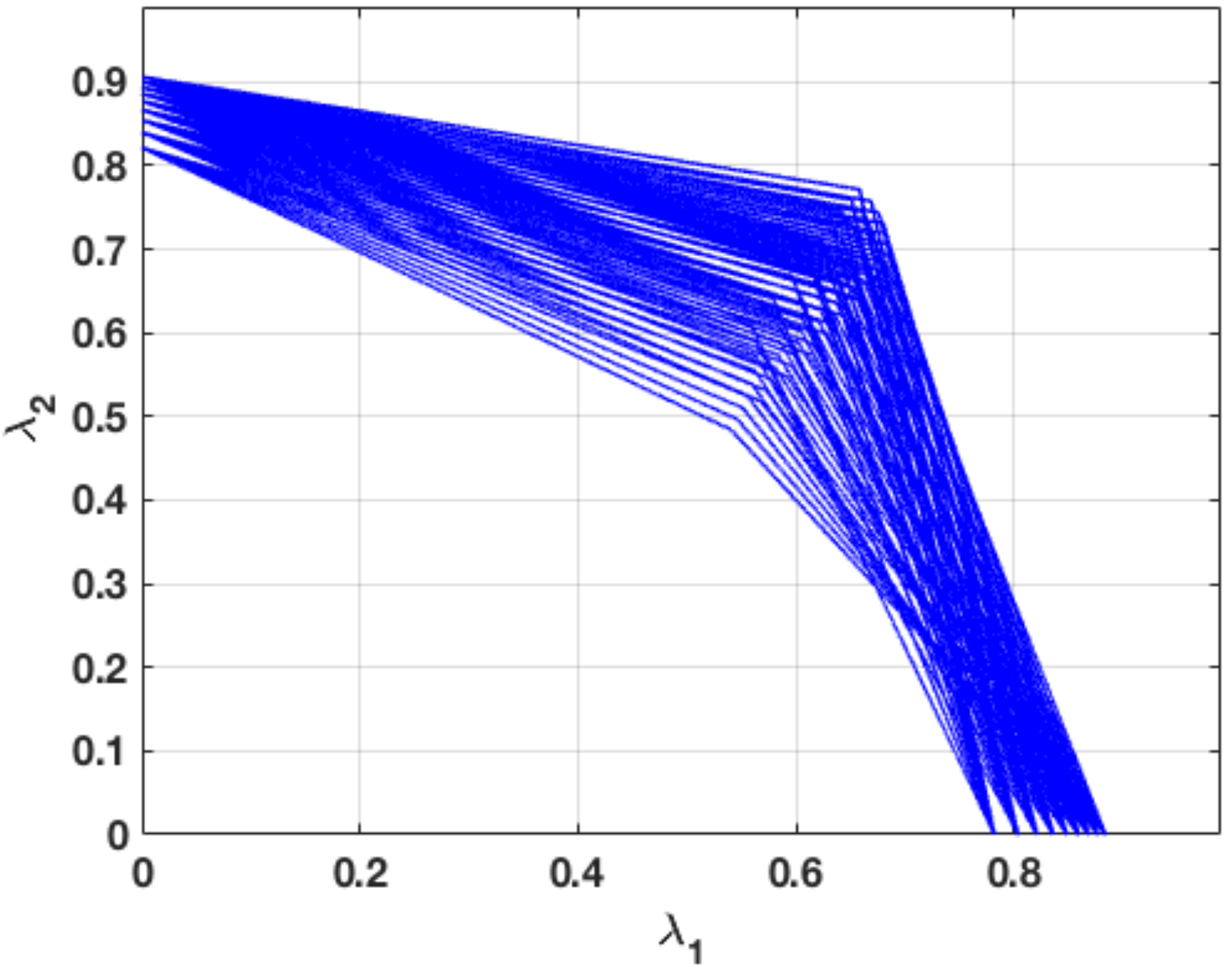}}	
\subfigure[IAN-SIC,$\gamma_1=2,\gamma_2=1.4$]{\includegraphics[scale=0.5]{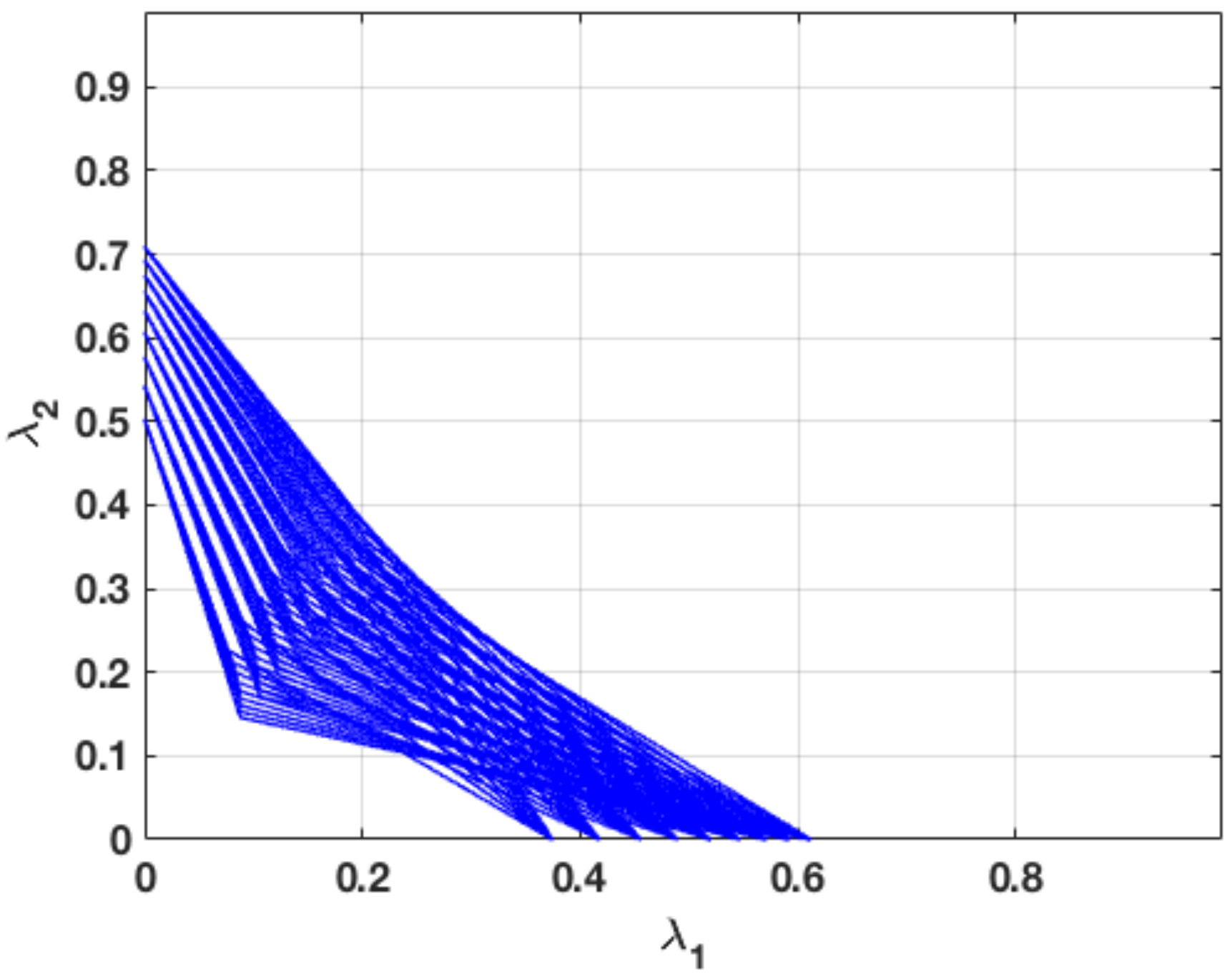}} \\
\caption{The closure of the stability region for the topology where $r_{11}=r_{22}=14, r_{12}=15, r_{21}=10$.}
\label{fig:Closure_Case2}
\end{figure*}

We observe that the black line defines the closure for the SIC for both cases of SINR thresholds. For the IAN, the black line approximates the closure; for low SINR thresholds there is a deviation in the area of high traffic regime for the first transmitter and the medium-to-low traffic regime for the second transmitter. For high SINR thresholds, the black line approximates well the closure for the IAN.

Another observation is that for higher SINR thresholds, all cases are non-convex sets, which is an indication that is better to apply time sharing. Nevertheless, the SIC-IAN scheme is closer to the performance of time sharing. For low SINR thresholds, there is no clear superior scheme, it depends on the traffic regimes.

\subsection{Closure of the stability region over all random access probabilities}
Here we obtain the closure of the stability region for the random access case over all transmission probabilities as defined in \eqref{eq:closure_def_RA_PA} for fixed transmit powers. More specifically, we consider the case where $p_1=600$ and $p_2=450$.

\subsubsection{First Topology: $r_{11}=r_{22}=10$, $r_{12}=r_{21}=5$}
The closure for the first topology is depicted in Fig. \ref{fig:Closure_RA_Case1}. We observe that the closure for the case where both receivers use SIC is a convex set. For $\gamma_1=0.5,\gamma_2=0.4$, we can approach the region that is achieved by transmitting with maximum power $p_1=p_2=p_{max}$, in Fig. \ref{fig:Closure_Case1}, by simply adjusting the transmission probabilities. This is an interesting observation that shows that random access can be an effective interference mitigation scheme compared to adjusting the transmission power. For the other two cases regarding the interference handling schemes we have similar observations as in Section \ref{sec:NumResClosurePower1}.

\begin{figure*}[!htbp]
\centering
\subfigure[IAN, $\gamma_1=0.5,\gamma_2=0.4$]{\includegraphics[scale=0.5]{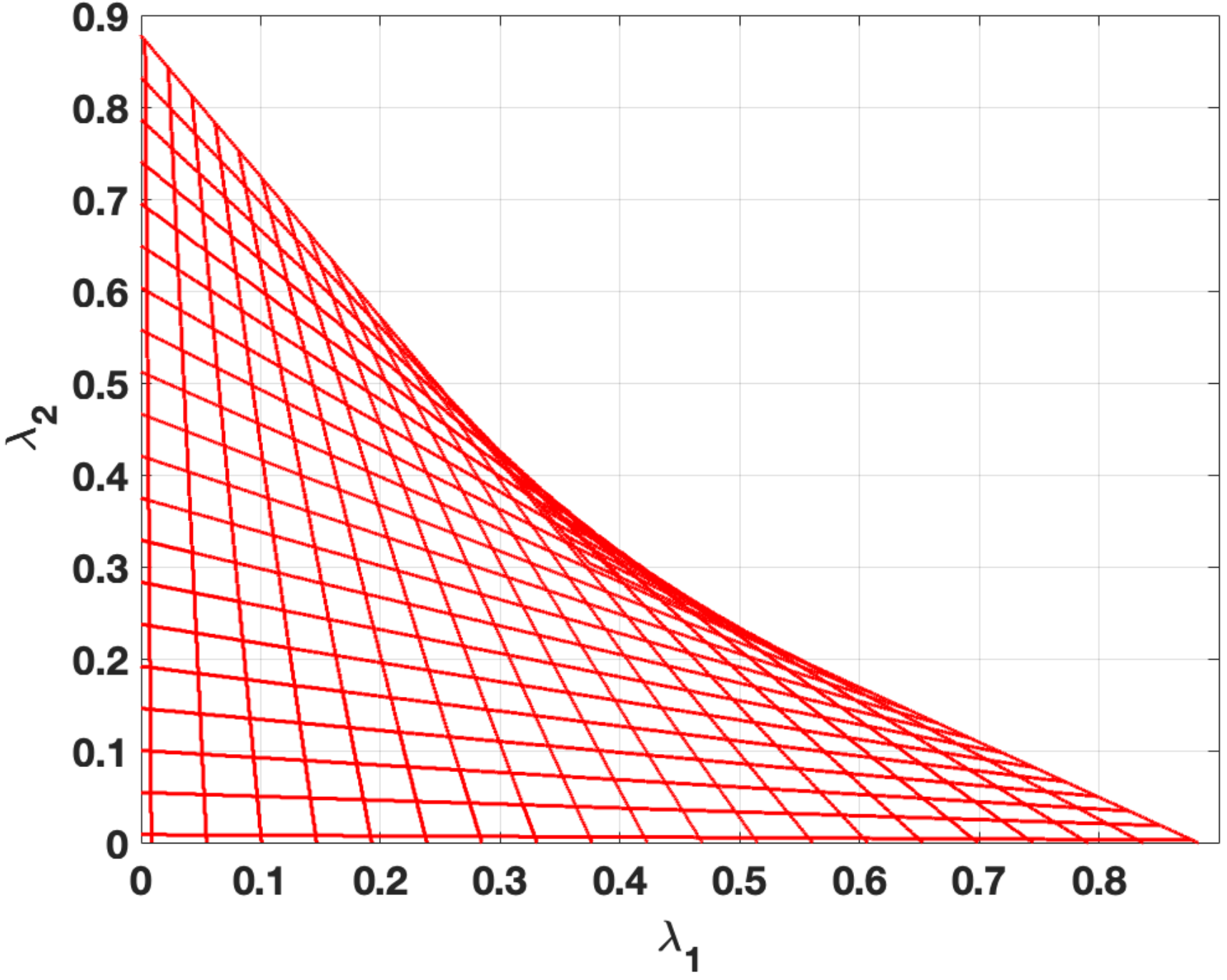}} 
\subfigure[IAN, $\gamma_1=2,\gamma_2=1.4$]{\includegraphics[scale=0.5]{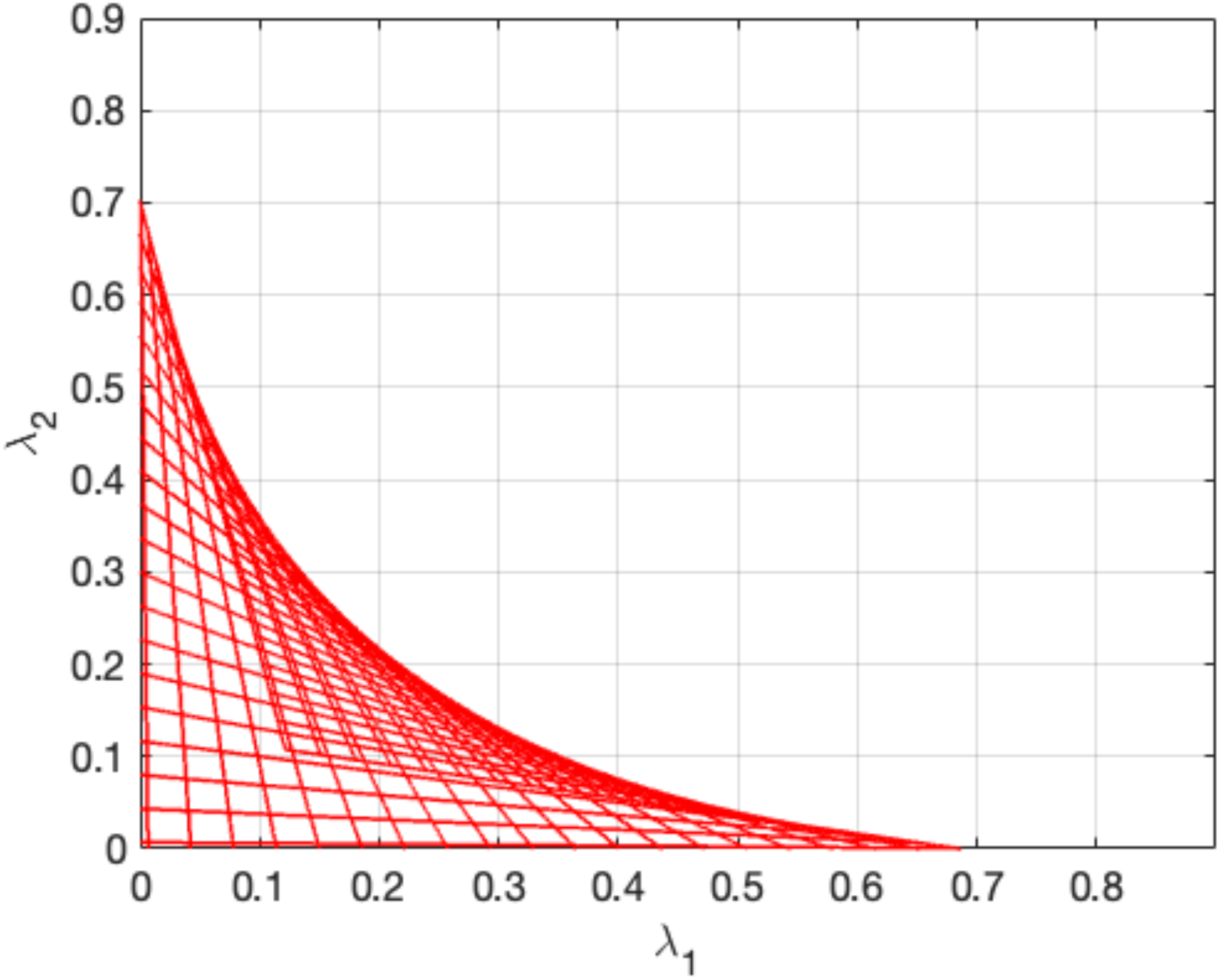}} \\
\subfigure[SIC,$\gamma_1=0.5,\gamma_2=0.4$]{\includegraphics[scale=0.5]{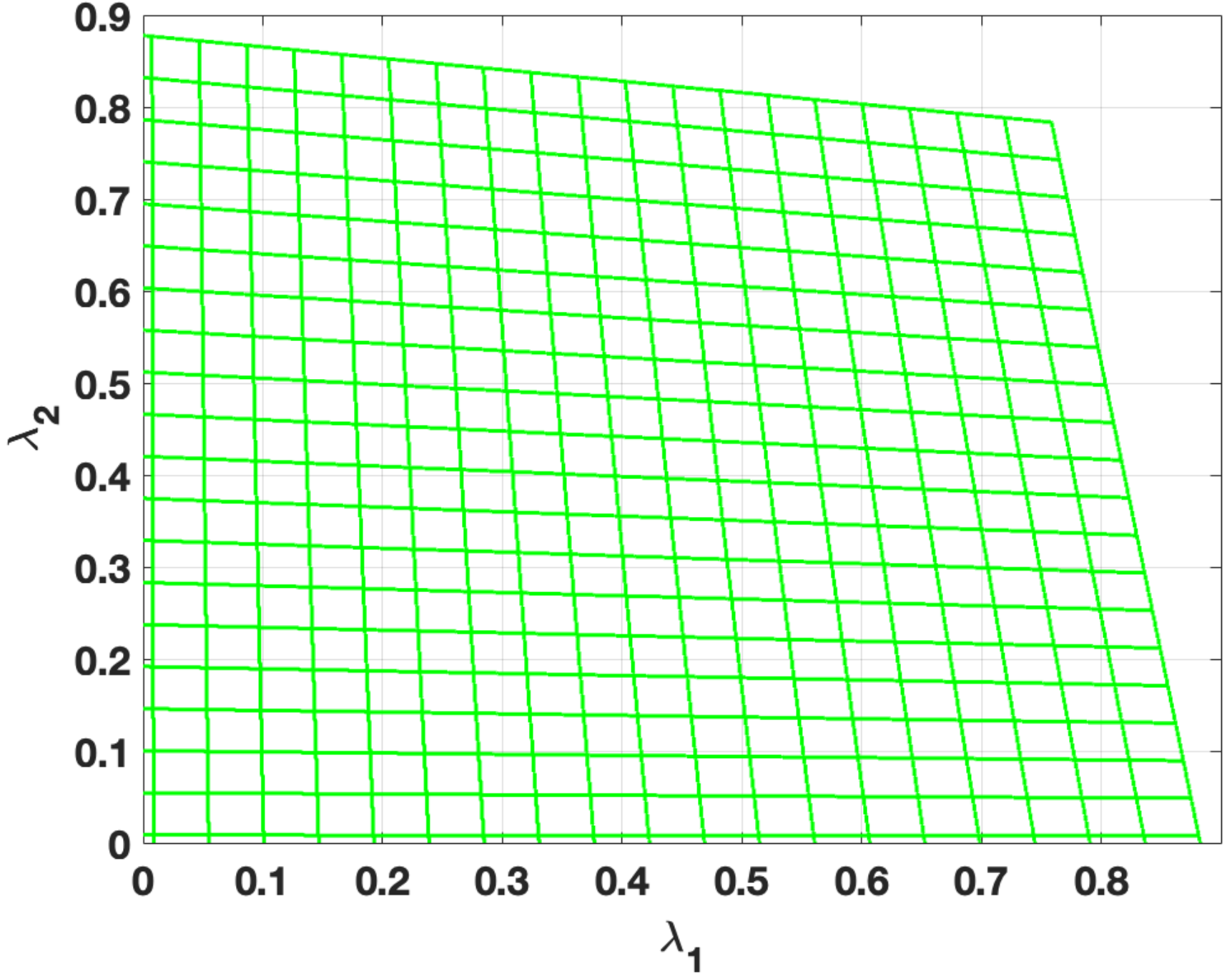}}
\subfigure[SIC,$\gamma_1=2,\gamma_2=1.4$]{\includegraphics[scale=0.5]{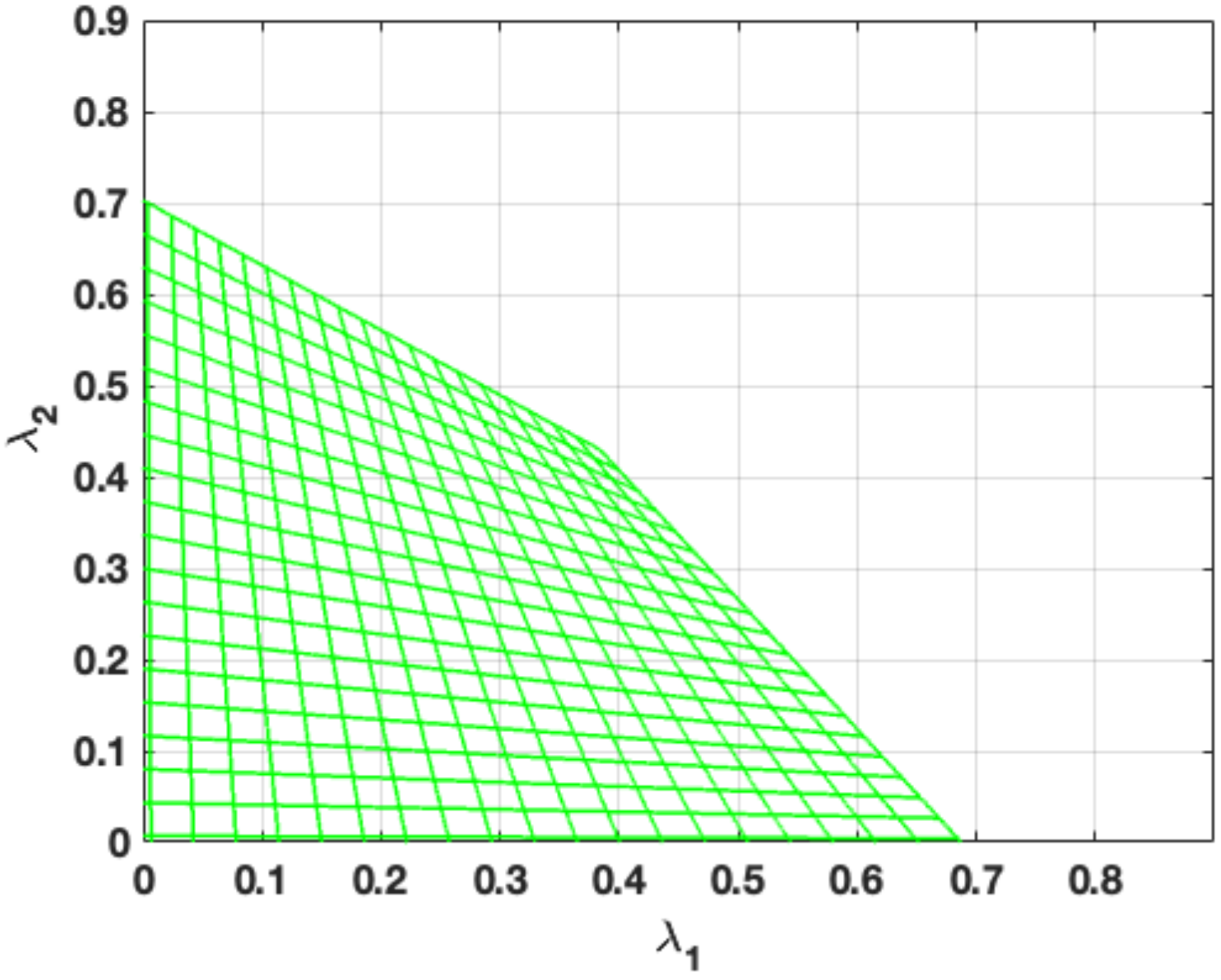}} \\
\subfigure[SIC-IAN,$\gamma_1=0.5,\gamma_2=0.4$]{\includegraphics[scale=0.5]{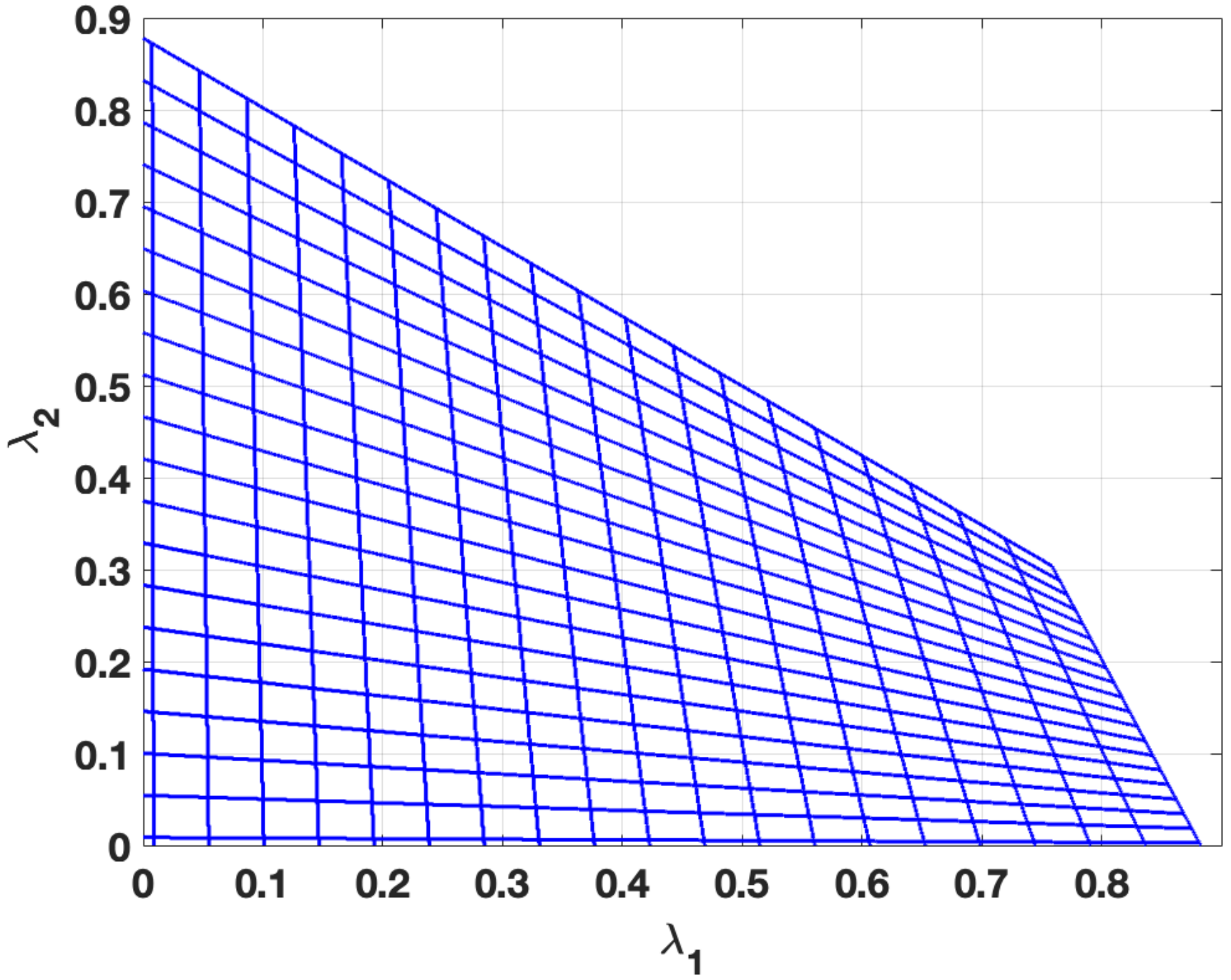}}	
\subfigure[SIC-IAN,$\gamma_1=2,\gamma_2=1.4$]{\includegraphics[scale=0.5]{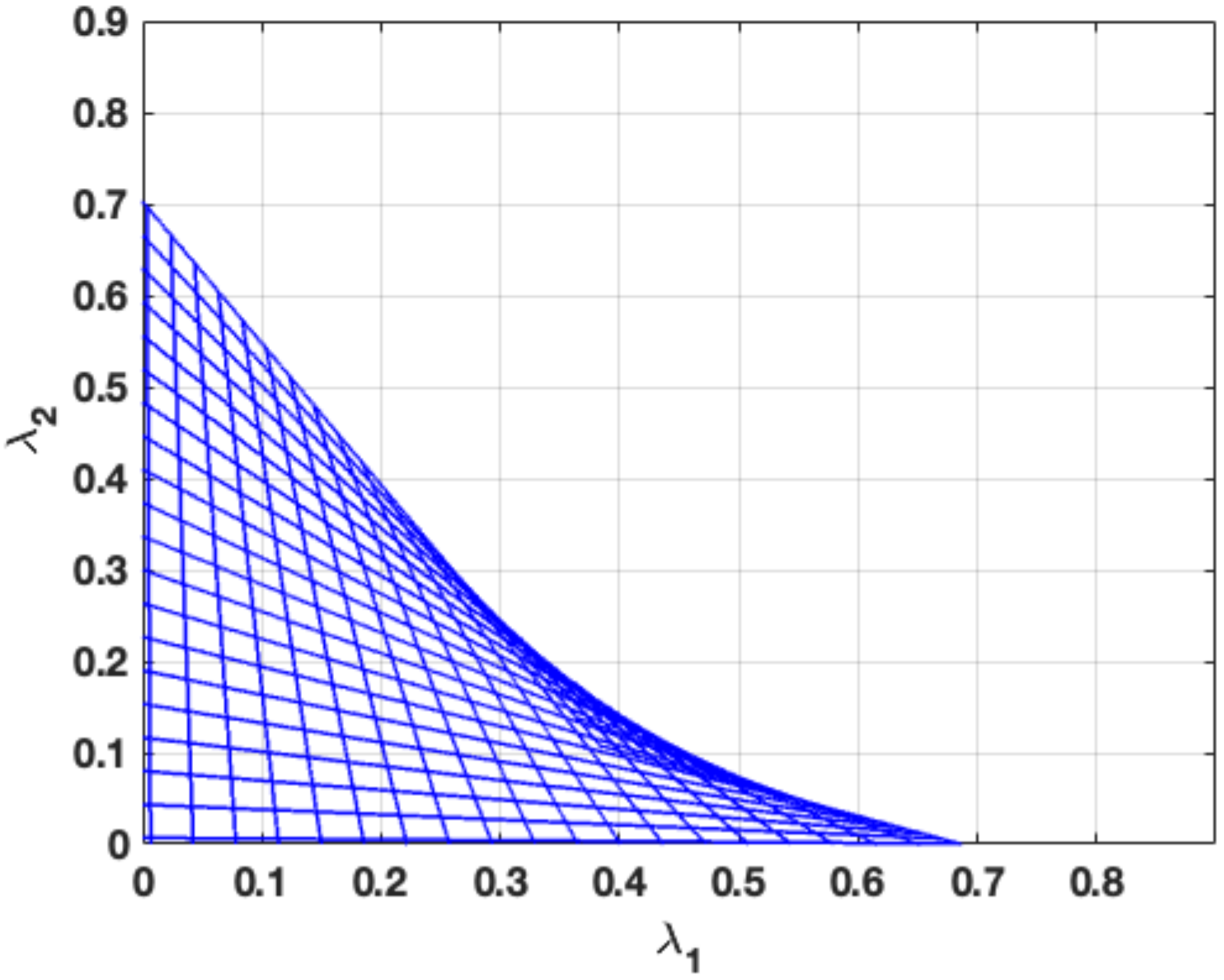}} \\
\caption{The closure over all access probabilities of the stability region for the topology where $r_{11}=r_{22}=10, r_{12}=r_{21}=5$.}
\label{fig:Closure_RA_Case1}
\end{figure*}

\subsubsection{Second Topology: $r_{11}=r_{22}=14$, $r_{12}=15, r_{21}=10$}

The closure for the second topology is depicted in Fig. \ref{fig:Closure_RA_Case2}. We observe that for the three schemes, for the low SINR thresholds, the closure is a convex set, which implies that they achieve better performance than time sharing. meaning that they achieve better performance than time sharing. We also observe that adjusting the access probabilities can be beneficial and it can have similar performance to increasing the transmit power. However, in the high values of the SINR thresholds, we observe that the closures are non-convex thus, time sharing schemes will are preferable.

\begin{figure*}[!htbp]
\centering
\subfigure[IAN, $\gamma_1=0.5,\gamma_2=0.4$]{\includegraphics[scale=0.5]{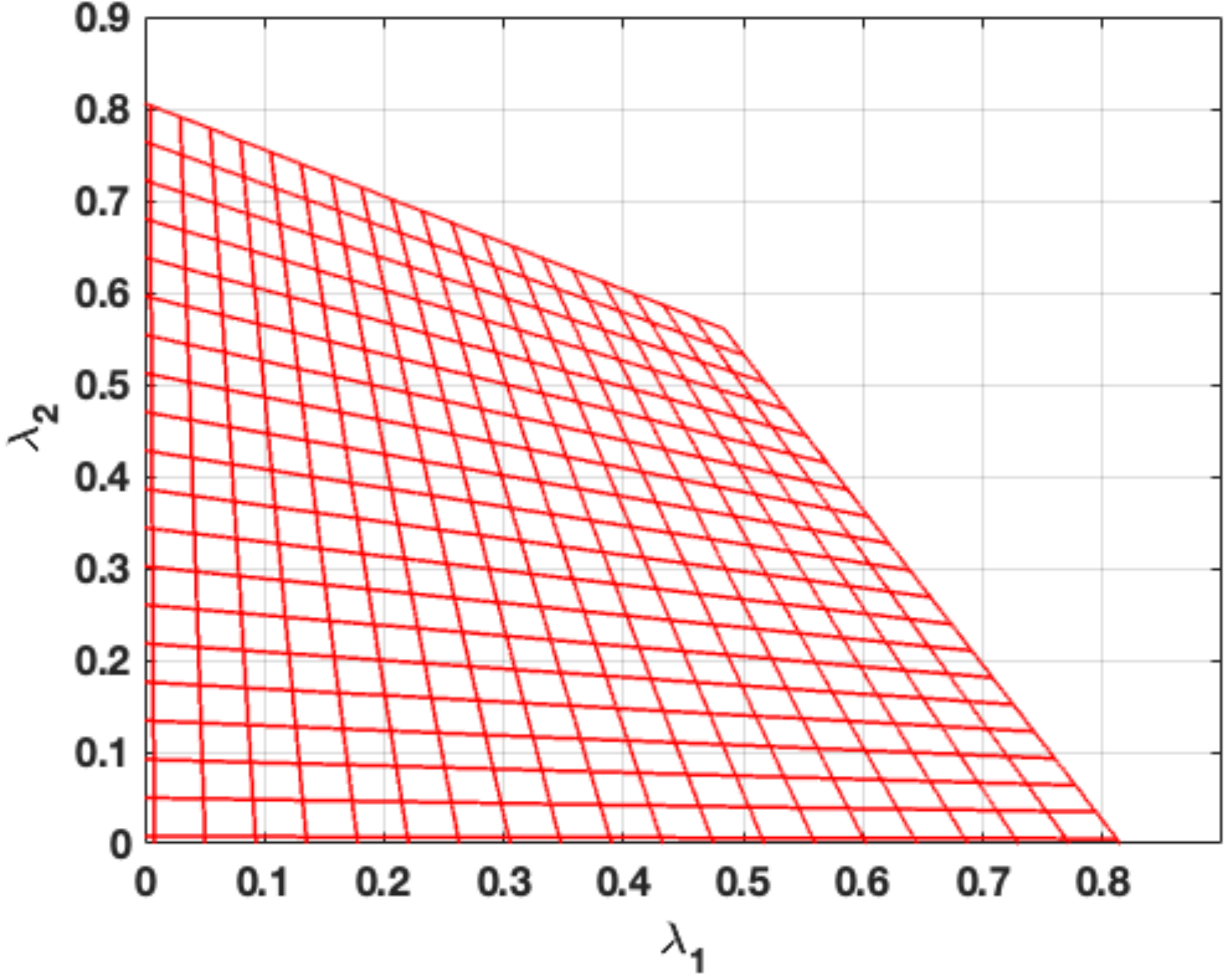}} 
\subfigure[IAN, $\gamma_1=2,\gamma_2=1.4$]{\includegraphics[scale=0.5]{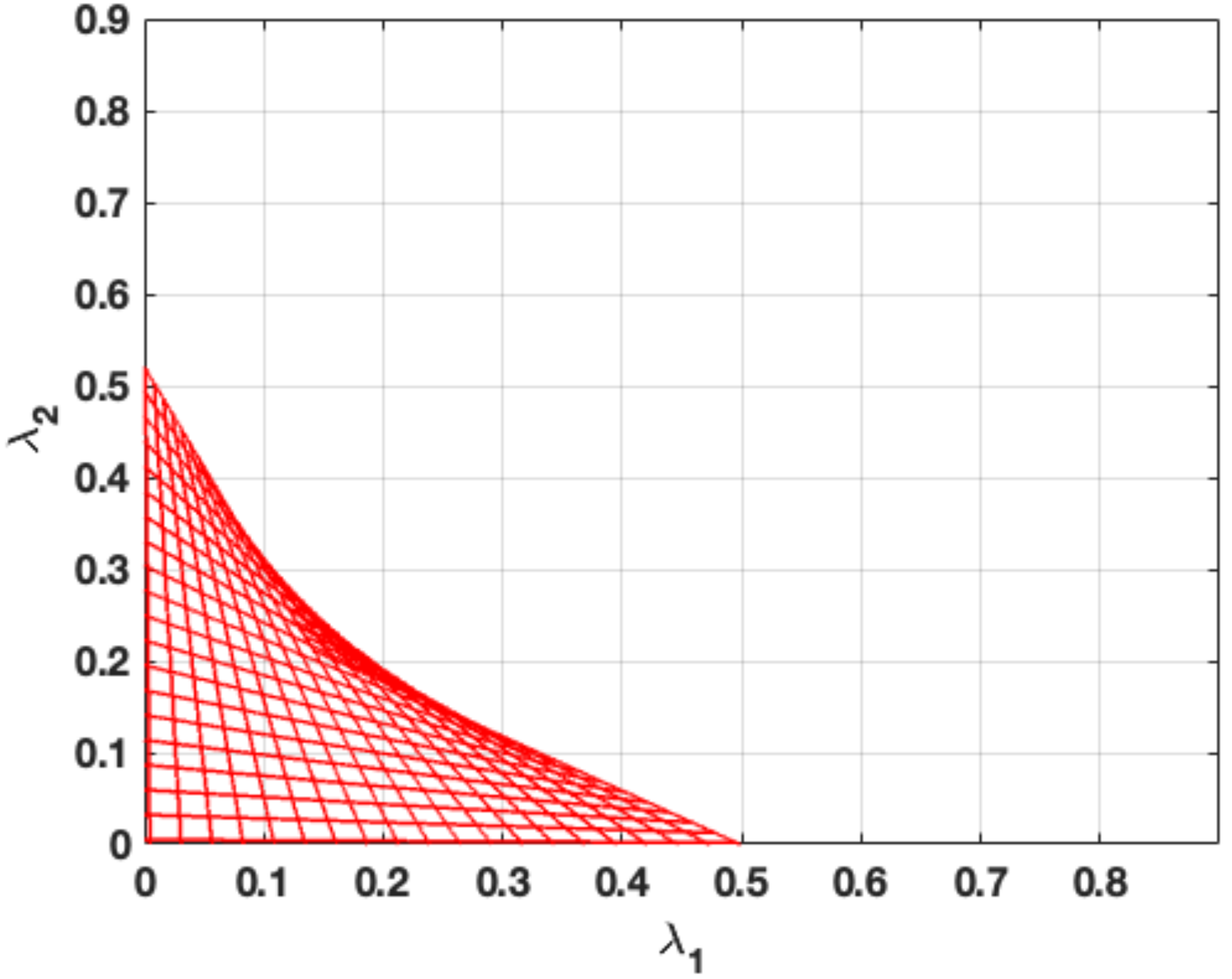}} \\
\subfigure[SIC,$\gamma_1=0.5,\gamma_2=0.4$]{\includegraphics[scale=0.5]{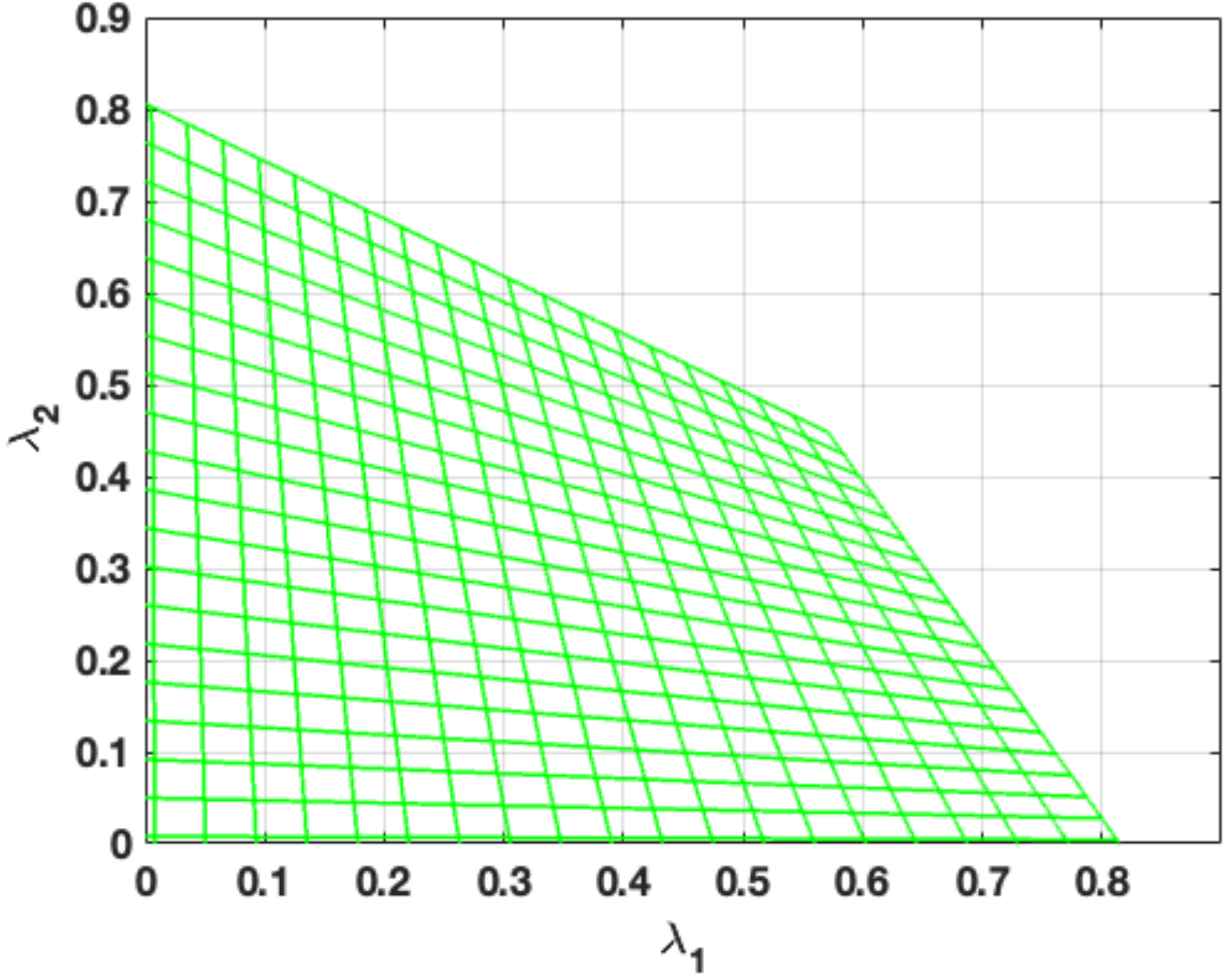}}
\subfigure[SIC,$\gamma_1=2,\gamma_2=1.4$]{\includegraphics[scale=0.5]{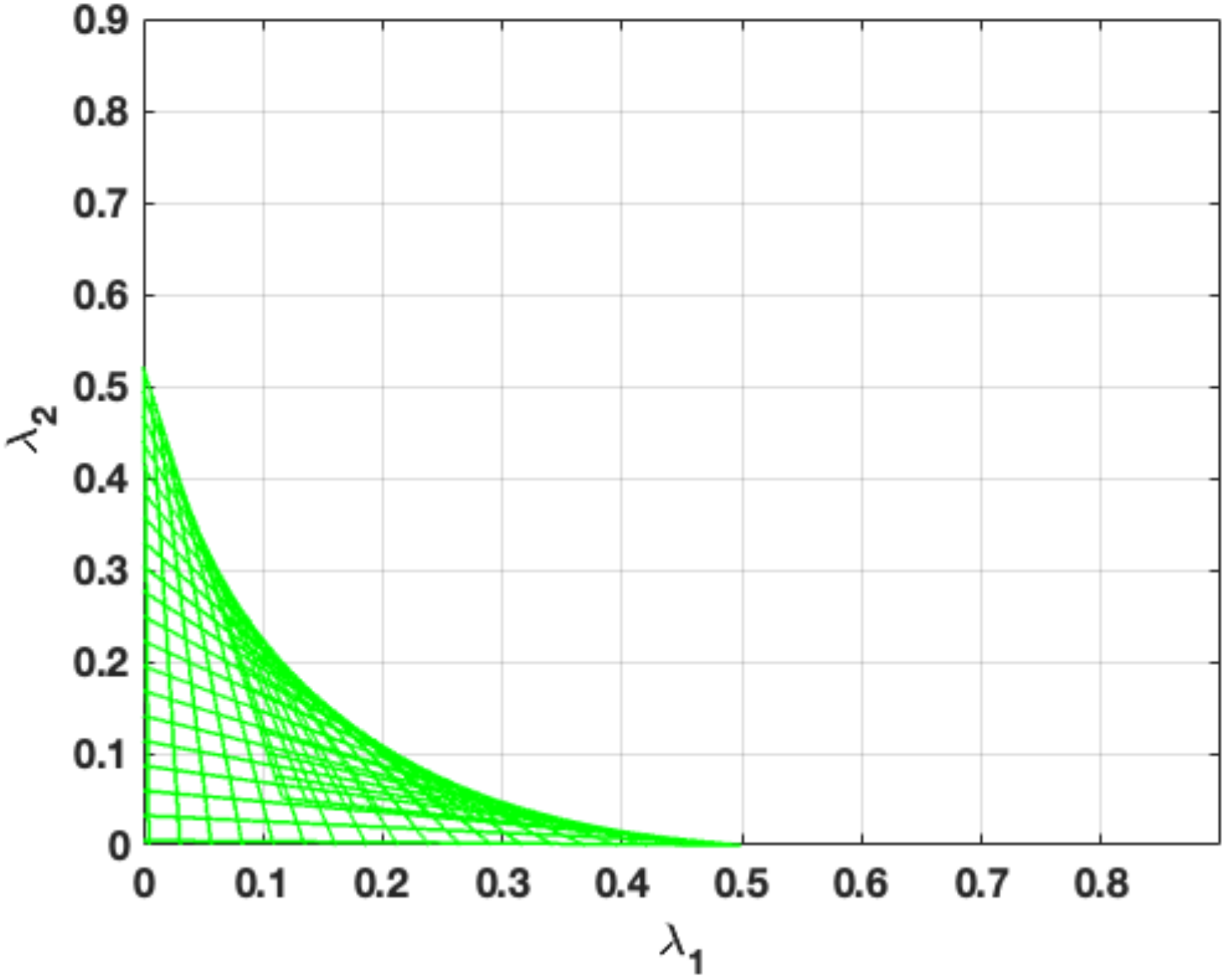}} \\
\subfigure[SIC-IAN,$\gamma_1=0.5,\gamma_2=0.4$]{\includegraphics[scale=0.5]{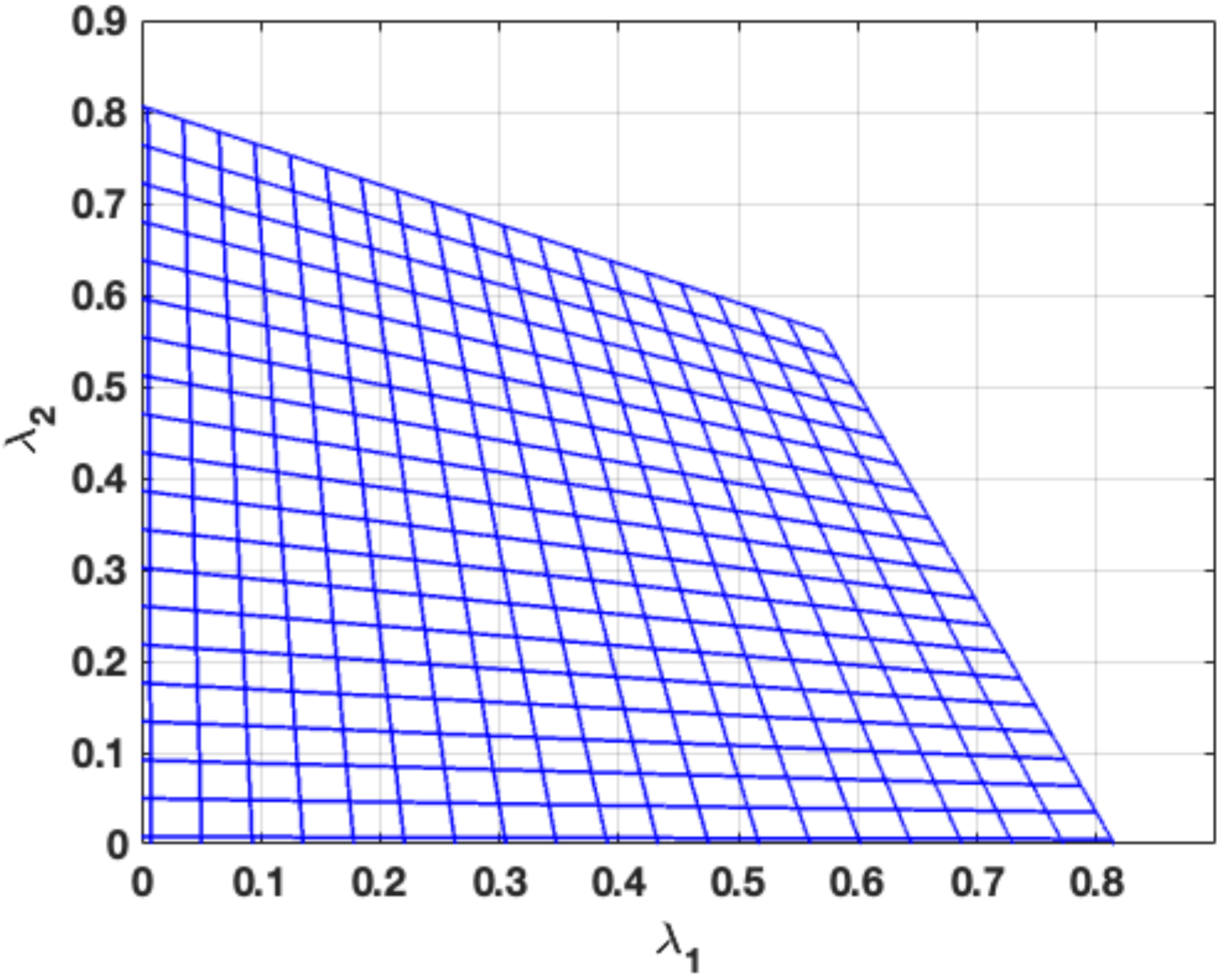}}	
\subfigure[SIC-IAN,$\gamma_1=2,\gamma_2=1.4$]{\includegraphics[scale=0.5]{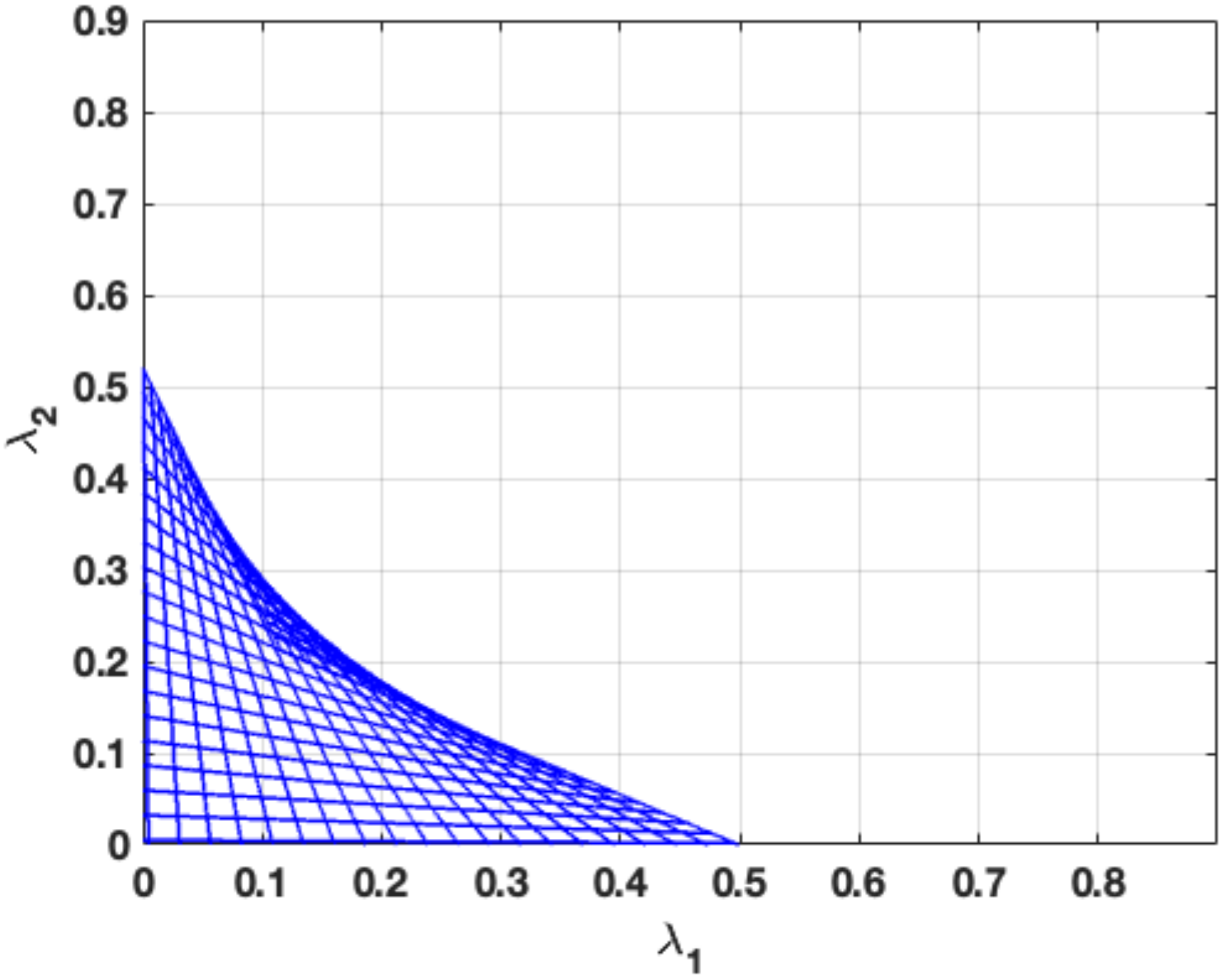}} \\
\caption{The closure over all access probabilities of the stability region for the topology where $r_{11}=r_{22}=14, r_{12}=15, r_{21}=10$.}
\label{fig:Closure_RA_Case2}
\end{figure*}

\section{Conclusions} \label{sec:conclusions}
In this work, we have derived the stability region of the two-user interference channel for both the general case and for two specific interference management strategies, namely treating interference as noise and successive interference cancellation at the receivers. Furthermore, we have provided conditions for the convexity of the stability region, as well as for ordering the interference management techniques depending on how broad the stability region achieved is. In addition, we have considered the effect of multiple transmit antennas and of random access on the stable throughput region.
 
Future extensions of this work may include the investigation of other random access schemes for the two-user interference channel, such as queue-aware schemes that can further increase the performance of the stable throughput combined with SIC. A future direction of this work is the investigation of the delay performance by utilizing the theory of boundary value problems. Another possible extension of this work is to consider the effect of successive interference cancellation in network level cooperative schemes.

\section*{Acknowledgements}
This work was supported in part by ELLIIT and the Center for Industrial Information Technology (CENIIT).

\bibliographystyle{IEEEtran}
\bibliography{thesis}

\begin{thebibliography}{10}
\providecommand{\url}[1]{#1}
\csname url@samestyle\endcsname
\providecommand{\newblock}{\relax}
\providecommand{\bibinfo}[2]{#2}
\providecommand{\BIBentrySTDinterwordspacing}{\spaceskip=0pt\relax}
\providecommand{\BIBentryALTinterwordstretchfactor}{4}
\providecommand{\BIBentryALTinterwordspacing}{\spaceskip=\fontdimen2\font plus
\BIBentryALTinterwordstretchfactor\fontdimen3\font minus
  \fontdimen4\font\relax}
\providecommand{\BIBforeignlanguage}[2]{{%
\expandafter\ifx\csname l@#1\endcsname\relax
\typeout{** WARNING: IEEEtran.bst: No hyphenation pattern has been}%
\typeout{** loaded for the language `#1'. Using the pattern for}%
\typeout{** the default language instead.}%
\else
\language=\csname l@#1\endcsname
\fi
#2}}
\providecommand{\BIBdecl}{\relax}
\BIBdecl

\bibitem{Szpankowski:stability}
W.~Szpankowski, ``Stability conditions for some distributed systems: Buffered
  random access systems,'' \emph{Adv. in App. Prob.}, vol.~26, no.~2, pp.
  498--515, Jun. 1994.

\bibitem{AEUnion}
A.~Ephremides and B.~Hajek, ``Information theory and communication networks: an
  unconsummated union,'' \emph{IEEE Transactions on Information Theory},
  vol.~44, no.~6, pp. 2416--2434, Oct 1998.

\bibitem{SastryNOW}
\BIBentryALTinterwordspacing
S.~Kompella and A.~Ephremides, ``Stable throughput regions in wireless
  networks,'' \emph{Foundations and Trends in Networking}, vol.~7, no.~4, pp.
  235--338, 2014. [Online]. Available:
  \url{http://dx.doi.org/10.1561/1300000039}
\BIBentrySTDinterwordspacing

\bibitem{Carleial75}
A.~Carleial, ``A case where interference does not reduce capacity,'' \emph{IEEE
  Trans. on Inform. Theory}, vol.~21, no.~5, pp. 569 -- 570, Sep. 1975.

\bibitem{Shang09}
X.~Shang, G.~Kramer, and B.~Chen, ``A new outer bound and the
  noisy-interference sum-rate capacity for {Gaussian} interference channels,''
  \emph{IEEE Trans. on Inform. Theory}, vol.~55, no.~2, pp. 689 --699, Feb.
  2009.

\bibitem{AV09}
V.~Annapureddy and V.~Veeravalli, ``{Gaussian} interference networks: Sum
  capacity in the low-interference regime and new outer bounds on the capacity
  region,'' \emph{IEEE Trans. on Inform. Theory}, vol.~55, no.~7, pp. 3032
  --3050, Jul. 2009.

\bibitem{b:Gamal_NIT}
A.~E. Gamal and Y.-H. Kim, \emph{Network Information Theory}.\hskip 1em plus
  0.5em minus 0.4em\relax New York, NY, USA: Cambridge University Press, 2012.

\bibitem{MaamariTCCN2015}
D.~Maamari, D.~Tuninetti, and N.~Devroye, ``Multi-user cognitive interference
  channels: A survey and new capacity results,'' \emph{IEEE Transactions on
  Cognitive Communications and Networking}, vol.~1, no.~1, pp. 29--44, March
  2015.

\bibitem{ChaitanyaNCC2013}
K.~Chaitanya, U.~Mukherji, and V.~Sharma, ``Power allocation for interference
  channels,'' in \emph{National Conference on Communications (NCC)}, Feb 2013,
  pp. 1--5.

\bibitem{LindblomWCL2013}
J.~Lindblom, E.~Karipidis, and E.~G. Larsson, ``Achievable outage rate regions
  for the {MISO} interference channel,'' \emph{IEEE Wireless Communications
  Letters}, vol.~2, no.~4, pp. 439--442, August 2013.

\bibitem{MISO_Pareto}
E.~A. Jorswieck, E.~G. Larsson, and D.~Danev, ``Complete characterization of
  the {Pareto} boundary for the {MISO} interference channel,'' \emph{IEEE
  Transactions on Signal Processing}, vol.~56, no.~10, pp. 5292--5296, Oct.
  2008.

\bibitem{b:Simeone}
O.~Simeone, Y.~Bar-Ness, and U.~Spagnolini, ``Stable throughput of cognitive
  radios with and without relaying capability,'' \emph{IEEE Trans. on Commun.},
  vol.~55, no.~12, pp. 2351 --2360, Dec. 2007.

\bibitem{DimitriouTWC2018}
I.~Dimitriou and N.~Pappas, ``Stable throughput and delay analysis of a random
  access network with queue-aware transmission,'' \emph{IEEE Transactions on
  Wireless Communications}, vol.~17, no.~5, pp. 3170--3184, May 2018.

\bibitem{QiaoTCOM2013}
D.~Qiao, M.~C. Gursoy, and S.~Velipasalar, ``Achievable throughput regions of
  fading broadcast and interference channels under {QoS} constraints,''
  \emph{IEEE Transactions on Communications}, vol.~61, no.~9, pp. 3730--3740,
  September 2013.

\bibitem{PanACCESS2017}
H.~Pan, J.~Li, and K.~Wang, ``On stability region of two-user interference
  network with permission to utilize past receptions,'' \emph{IEEE Access},
  vol.~5, pp. 4913--4926, 2017.

\bibitem{JiaoIET2017}
W.~Jiao and G.~Liu, ``Performance analysis and power allocation for the
  unsaturated two-user interference channel,'' \emph{IET Communications},
  vol.~11, no.~2, pp. 282--291, 2017.

\bibitem{PappasTCOM2018}
N.~Pappas, M.~Kountouris, A.~Ephremides, and V.~Angelakis, ``Stable throughput
  region of the two-user broadcast channel,'' \emph{IEEE Transactions on
  Communications}, pp. 1--1, 2018.

\bibitem{b:Naware}
V.~Naware, G.~Mergen, and L.~Tong, ``Stability and delay of finite-user slotted
  {ALOHA} with multipacket reception,'' \emph{IEEE Trans. on Inform. Theory},
  vol.~51, no.~7, pp. 2636 -- 2656, Jul. 2005.

\bibitem{PappasITW2013}
N.~Pappas, M.~Kountouris, and A.~Ephremides, ``The stability region of the
  two-user interference channel,'' in \emph{IEEE Information Theory Workshop
  (ITW)}, Sept 2013, pp. 1--5.

\bibitem{rao:stability}
R.~Rao and A.~Ephremides, ``On the stability of interacting queues in a
  multi-access system,'' \emph{IEEE Trans. on Inform. Theory}, vol.~34, no.~5,
  pp. 918--930, Sep. 1988.

\bibitem{PappasGC10}
N.~Pappas, A.~Traganitis, and A.~Ephremides, ``Stability and performance issues
  of a relay assisted multiple access scheme,'' in \emph{IEEE Global
  Telecommunications Conference (GLOBECOM)}, Dec. 2010.

\bibitem{PappasITW11}
N.~Pappas, A.~Ephremides, and A.~Traganitis, ``Relay-assisted multiple access
  with multi-packet reception capability and simultaneous transmission and
  reception,'' in \emph{IEEE Information Theory Workshop (ITW)}, Oct. 2011, pp.
  578--582.

\bibitem{PappasJCN2016}
N.~Pappas, M.~Kountouris, J.~Jeon, A.~Ephremides, and A.~Traganitis, ``Effect
  of energy harvesting on stable throughput in cooperative relay systems,''
  \emph{Journal of Communications and Networks}, vol.~18, no.~2, pp. 261--269,
  April 2016.

\bibitem{b:Loynes}
R.~Loynes, ``The stability of a queue with non-independent inter-arrival and
  service times,'' \emph{Proc. Camb. Philos.Soc}, vol.~58, no.~3, pp. 497--520,
  1962.

\bibitem{b:Bertsekas}
D.~Bertsekas and R.~Gallager, \emph{Data networks (2nd ed.)}.\hskip 1em plus
  0.5em minus 0.4em\relax Upper Saddle River, NJ, USA: Prentice-Hall, Inc.,
  1992.

\end{thebibliography}

\end{document}